\newtheorem{theorem}{Theorem}
\newtheorem{lemma}[theorem]{Lemma}
\newtheorem{remark}[theorem]{Remark}
\newtheorem{corollary}[theorem]{Corollary}
\newtheorem{example}[theorem]{Example}
\newcommand{\rank}{{\mathrm{rank}}}
\newcommand{\C}{{\mathcal{C}}}
\newcommand{\bF}{ {\mathbb F}}
\newcommand{\tabincell}[2]{\begin{tabular}{@{}#1@{}}#2\end{tabular}}
\begin{document}
%
\title{Several new infinite families of NMDS codes with arbitrary dimensions supporting $t$-designs
\thanks{
This work was supported in part by the National Natural Science Foundation of China under Grant Numbers 62272148 and in part by the Natural Science Foundation of Hubei Province of China under Grant Number 2023AFB847 and the Sunrise Program of Wuhan under the Grant Number 2023010201020419. The corresponding author is Dabin Zheng.
}
}

\author{Yaozong Zhang,\,\,\, Dabin Zheng,\,\,\, Xiaoqiang Wang,\,\,\, Wei Lu {\thanks{Y. Zhang, D. Zheng, X. Wang and W. Lu are with the Hubei Key Laboratory of Applied Mathematics, Faculty of Mathematics and Statistics, Hubei University, Wuhan 430062, China
(e-mail:zyzsdutms@163.com; dzheng@hubu.edu.cn; waxiqq@163.com; weilu23@hubu.edu.cn), and D. Zheng is also with the Key Laboratory of Intelligent Sensing System and Security, Ministry of Education, Hubei University, Wuhan 430062, China.}}
 }

\maketitle
\begin{abstract}
Near maximum distance separable (NMDS) codes, where both the code and its dual are almost maximum distance separable, play pivotal roles in combinatorial design theory and cryptographic applications. Despite progress in fixed dimensions (e.g., dimension 4 codes by Ding and Tang \cite{Ding2020}), constructing NMDS codes with arbitrary dimensions supporting $t$-designs ($t\geq 2$) has remained open. In this paper, we construct two infinite families of NMDS codes over $\mathbb{F}_q$ for any prime power $q$ with flexible dimensions and determine their weight distributions. Further, two additional families with arbitrary dimensions over $\mathbb{F}_{2^m}$ supporting $2$-designs and $3$-designs, and their weight distributions are obtained. Our results fully generalize prior fixed-dimension works~\cite{DingY2024,Heng2023,Heng20231,Xu2022}, and affirmatively settle the Heng-Wang conjecture \cite{Heng2023} on the existence of NMDS codes with flexible parameters supporting $2$-designs.
\end{abstract}

\begin{IEEEkeywords}
Subset sums; NMDS codes; weight distributions; $t$-designs
\end{IEEEkeywords}
\section{Introduction}\label{sec-intro}

Throughout this paper, let $\mathbb{F}_q$ denote the finite field with $q$ elements, where $q=p^m$ is a prime power, $m \geq 3$ if $p=2$ and $m \geq 2$ otherwise.
Let $\mathbb{F}_q^*$ represent the multiplicative group of $\mathbb{F}_q$, and $\mathbb{F}_q^n$ the $n$-dimensional vector space over $\mathbb{F}_q$. A \emph{linear code} $\mathcal{C}$ with parameters $[n,k]_q$ is defined as a $k$-dimensional linear subspace of $\mathbb{F}_q^n$. A  \emph{generator matrix} for an $[n, k]_q$ code $\mathcal{C}$ is any $k\times n$ matrix $G$ whose rows form a basis for $\mathcal{C}$. In general, there are many generator matrices
for a code. The dual code of $\mathcal{C}$ is given by
$
\mathcal{C}^\perp = \left\{\mathbf{v} \in \mathbb{F}_q^n : \langle \mathbf{c}, \mathbf{v} \rangle = 0 \text{ for all } \mathbf{c} \in \mathcal{C}\right\},
$
where $\langle \mathbf{c}, \mathbf{v} \rangle$ denotes the Euclidean inner product of $\mathbf{c}$ and $\mathbf{v}$. The generator matrix of a linear code is the \emph{parity-check matrix} of its dual code. For a codeword $\mathbf{c}=(c_1, c_2, \cdots, c_n) \in \mathcal{C}$, its support is defined as $\operatorname{supp}(\mathbf{c}) = \{1 \leq i \leq n : c_i \neq 0\}$, and  $\operatorname{wt}(\mathbf{c}) = \#\operatorname{supp}(\mathbf{c})$ is the Hamming weight of $\mathbf{c}$, and the \emph{minimum Hamming distance} $d(\mathcal{C})$ is the minimal Hamming weight among all non-zero codewords. The \emph{weight distribution} of $\mathcal{C}$ is characterized by the sequence $(A_0, A_1, \ldots, A_n)$, and the polynomial $ A(x) = 1 + A_1x + A_2x^2 + \cdots + A_nx^n$ is called the \emph{weight enumerator} of $\mathcal{C}$, where $A_i$ denotes the number of codewords of weight $i$ in $\mathcal{C}$.

\subsection{NMDS codes supporting $t$-designs}
For an $[n,k,d]_q$ linear code, the classical Singleton bound states that $d \leq n - k + 1$ \cite{Huffman2003}. A code achieving equality is termed \emph{maximum distance separable (MDS)}, whose dual code is also MDS. When $d = n - k$, the code is called \emph{almost maximum distance separable (AMDS)}, and if its dual remains AMDS, it qualifies as \emph{near maximum distance separable (NMDS)}. First introduced by Dodunekov and Landjev \cite{Dodunekov1995}, NMDS codes find significant applications in finite geometry \cite{Dodunekov1995}, combinatorial designs \cite{Ding2020,Heng2023,Heng20231,Xu2022}, and cryptography \cite{Simos2012,Zhou2009}.
Recent years have witnessed significant advances in NMDS code constructions. Li and Heng \cite{Li2023} proposed infinite families of NMDS codes with lengths $q+1 \leq n \leq q+4$ and dimension 3 over ${\mathbb{F}_{2^m}}$ ($m\geq3$), employing specialized generator matrices while deriving their weight enumerators. Fan's work \cite{Fan2024} achieved variable-length NMDS codes ($n=q+m+2$, dimension 3) over ${\mathbb{F}_{2^m}}$ ($m$ odd) through strategic additions of $m$ projective points to maximum arcs in $PG(2,q)$. For parameters $1\leq N<\frac{q+1}{2}$ (even) and odd $q$, Heng et al. \cite{Heng20221} constructed NMDS codes with length $\frac{q+1}{N}+1$ and dimension 3 using cyclic subgroups of $\mathbb{F}_{q^2}^*$. Extended results in \cite{DingY2024,Heng2022} established NMDS families with length $n\geq q+1$ and dimension 4. More constructions for NMDS codes with fixed dimensions are documented in \cite{Ding2020,Heng2023,Heng20231,Wang2021,Xu2024,Xu2023,Xu20241} and the references therein.

Let $t,k,n$ be positive integers with $1 \leq t \leq k \leq n$. A \emph{$t$-$(n,k,\lambda)$ design} is a pair $(\mathcal{P}, \mathcal{B})$, where $\mathcal{P}$ is an $n$-set of points and $\mathcal{B}$ a collection of distinct $k$-subsets (blocks) satisfying that every $t$-subset lies in exactly $\lambda$ blocks. The design is \emph{simple} if all blocks are distinct, and \emph{trivial} if $k=t~{\rm or}~n$.
For any $t$-$(n,k,\lambda)$ design, the parameter relation
$
  \binom{n}{t}\lambda = \binom{k}{t}b
$
holds with $b = \#\mathcal{B}$ \cite{Huffman2003}. Moreover, its \emph{complementary design} $(\mathcal{P}, \mathcal{B}^c)$ forms a $t$-$(n,n-k,\lambda^c)$ design, where
\begin{equation}\label{eq(1.1)}
  \lambda^c = \lambda \cdot \binom{n-t}{k}/\binom{n-t}{k-t},
\end{equation}
and $\mathcal{B}^c$ denotes block complements \cite{Heng2023}.
The connection between linear codes and $t$-designs manifests through two fundamental mechanisms: incidence matrices of $t$-designs generate linear codes \cite{Ding2015}, while certain codes inherently contain $t$-designs. A canonical construction method in coding theory is summarized below \cite{Heng2023}. For an $[n,k]$ code $\mathcal{C}$ over $\mathbb{F}_q$ with coordinate positions $\mathcal{P}(\mathcal{C}) := \{1,\ldots,n\}$, define the scaled support multiset
\begin{equation}
  \mathcal{B}_w(\mathcal{C}) := \frac{1}{q-1}\{\!\!\{\operatorname{supp}(\mathbf{c}) : \mathbf{c} \in \mathcal{C},\ \operatorname{wt}(\mathbf{c}) = w\}\!\!\},
\end{equation}
where the scaling accounts for scalar multiples \cite{Tang2020}. When $(\mathcal{P}(\mathcal{C}), \mathcal{B}_w(\mathcal{C}))$ forms a $t$-$(n,w,\lambda)$ design, the parameters satisfy
\begin{equation}\label{eq(1)}
  b = \frac{A_w}{q-1}, \quad \lambda = \frac{\binom{w}{t}A_w}{(q-1)\binom{n}{t}}.
\end{equation}
The Assmus-Mattson Theorem \cite{Assmus1969} provides sufficient conditions for the support structure of a code to form $t$-designs. Utilizing this criterion, numerous linear codes admitting
$t$-designs have been discovered (see, for example, \cite{Ding2017,Ding2018,DingT2020,DingT2022,Heng20231,Li20231,Tang2019,WangX2023,Xu2022}), and some new $t$-designs have been obtained.

The investigation of NMDS codes supporting $t$-designs originated with Golay's 1949 discovery of the $[11,6,5]_3$ code admitting a $4$-design \cite{Golay1949}. A fundamental breakthrough came in 2020 when Ding and Tang \cite{Ding2020} constructed the first infinite families of NMDS codes with length $q+1$ and dimension $4$ over $\mathbb{F}_q$ ($q=2^m$ and $3^m$) supporting $2$-designs and $3$-designs, resolving a seventy-year open problem.
Heng et al. \cite{Heng2023,Heng20231} subsequently developed NMDS codes over $\mathbb{F}_{2^m}$ ($m\geq3$) with lengths $q-1$ or $q$ and fixed dimensions $3\leq k\leq6$ supporting $2$-designs or $3$-designs, while conjecturing similar results for arbitrary dimensions \cite[Conjecture 36]{Heng2023}.
Xu et al. constructed an infinite family of NMDS codes with length $q$ and dimension $3$ over $\mathbb{F}_{3^m}$ supporting $2$-designs \cite{Xu2022}, and NMDS codes with length $q+1$ and dimension $7$ over $\mathbb{F}_{3^{2m}}$ supporting $3$-designs \cite{Xu2024}. Furthermore, Tang and Ding \cite{Tang2021} presented the first infinite family of NMDS codes with length $q+1$ and dimension $6$ over $\mathbb{F}_{2^m}$ supporting $4$-designs. Up to now, there are only a few known infinite families of NMDS codes supporting $t$-designs for $t\geq2$ in the literature.

\subsection{Our motivations and contributions}
Through the analysis of the aforementioned literature, especially the papers \cite{DingY2024,Heng2023,Li2023,Xu2023}, our investigation is motivated by two fundamental questions as follows:
\begin{itemize}
\item [(i)] Is there an infinite family of NMDS codes with arbitrary dimensions supporting $t$-designs for $t\geq2$?
\item [(ii)] Is Heng-Wang's conjecture \cite{Heng2023} on flexible NMDS codes supporting 2-designs valid?
\end{itemize}

In this paper, we focus on the constructions of infinite families of NMDS codes with arbitrary dimensions by selecting some special matrices over finite fields as their generator matrices.
By using the subset-sum theorem provided by \cite{Li2008}, we construct four new infinite families of NMDS codes with arbitrary dimensions, two of which support 2-designs and 3-designs. Our key contributions are summarized as follows:
\begin{itemize}
\item We propose two new infinite families of NMDS codes with lengths $q+1,q+2$ and arbitrary dimensions over $\mathbb{F}_q$ for any prime power $q$ (See Theorems \ref{th3.1} and \ref{th3.4}).
 Notably, these codes generalize the related constructions of NMDS codes given in \cite{DingY2024}. Compared with the known ones constructed in \cite{Ding2020,Heng2022,Li2023,Wang2021}, these codes have different generator matrices, weight distributions and more flexible parameters.
\item We construct two infinite families of NMDS codes of lengths $2^m-1$ and $2^m$ supporting $2$-designs and $3$-designs with arbitrary dimensions over $\mathbb{F}_{2^m}$ (See Theorems \ref{th4.1} and \ref{th4.4}). To our knowledge, these represent the first infinite families of NMDS codes with arbitrary dimensions supporting $t$-designs for $t\geq2$. Our results not only generalize fixed-dimension constructions in \cite{Heng2023,Heng20231,Xu2022}, but also provide an affirmative resolution to the Heng-Wang conjecture \cite{Heng2023}.
\item We provide an alternative proof of the well-known subset sum problem for binary case. Some new recurrence relations for  $N(k,0,{\bF_{2^m}^*})$ and $N(k,0,{\bF_{2^m}})$ defined by Eq.(\ref{eq(2)}) and new combinatorial identity are derived (see Appendix).
\end{itemize}

Furthermore, some known NMDS codes supporting $2$-designs and $3$-designs are listed in Tables \ref{tab1} and \ref{tab2}, respectively. Compared with these codes, it is easy to see that our NMDS codes supporting $2$-designs and $3$-designs constructed in this paper have more flexible parameters.

\begin{table}[h!]
{ \small
\centering
\setlength{\tabcolsep}{0.8mm}
\caption{Some known NMDS codes supporting $2$-designs}
\label{tab1}
 \begin{tabular}{llllc}
\toprule
Parameters($\mathcal{C}/\mathcal{C}^\bot$)&$A_{n-k}=A_{k}^\bot$ & $t$-$(n,n-k,\lambda)$/ $t$-$(n,k,\lambda^\bot)$ & Constraints & Refs.\\
\midrule
\tabincell{l}{$[q-1,3,q-4]_q$\\$[q-1,q-4,3]_q$}&$\frac{(q-1)^2(q-2)}{6}$&\tabincell{l}{2-$(q-1,q-4,\frac{(q-4)(q-5)}{6})$\\2-$(q-1,4,\frac{(q-4)(q-7)}{2})$}&$q=2^m,m\geq3$
&\cite{Heng2023}\\
\midrule
\tabincell{l}{$[q-1,4,q-5]_q$\\$[q-1,q-5,4]_q$}&$\frac{(q-1)^2(q-2)(q-8)}{24}$&\tabincell{l}{2-$(q-1,q-5,\frac{(q-5)(q-6)(q-8)}{24})$\\2-$(q-1,4,\frac{(q-8)}{2})$}&\tabincell{c}{
$q=2^m,m>3$\\ is odd}&\cite{Heng2023}\\
\midrule
\tabincell{l}{$[q-1,4,q-5]_q$\\$[q-1,q-5,4]_q$}&$\frac{(q-1)^2(q-2)(q-4)}{24}$&\tabincell{l}{2-$(q-1,q-5,\frac{(q-4)(q-5)(q-6)}{24})$\\2-$(q-1,4,\frac{(q-4)}{2})$}&$q=2^m,m\geq3$
&\cite{Heng2023}\\
\midrule
\tabincell{l}{$[q-1,5,q-6]_q$\\$[q-1,q-6,5]_q$}&$\frac{(q-1)^2(q-2)(q-5)(q-8)}{120}$&\tabincell{l}{2-$(q-1,q-6,\frac{(q-5)(q-6)(q-7)(q-8)}{120})$\\2-$(q-1,5,\frac{(q-5)(q-8)}{4})$}
&\tabincell{c}{$q=2^m,m>3$\\ is odd}&\cite{Heng2023}\\
\midrule
\tabincell{l}{$[q-1,5,q-6]_q$\\$[q-1,q-6,5]_q$}&$\frac{(q-1)^2(q-2)(q-4)(q-8)}{120}$&\tabincell{l}{2-$(q-1,q-6,\frac{(q-4)(q-6)(q-7)(q-8)}{120})$\\2-$(q-1,5,\frac{(q-4)(q-8)}{4})$}
&$q=2^m,m>3$&\cite{Heng2023}\\
\midrule
\tabincell{l}{$[q-1,6,q-7]_q$\\$[q-1,q-7,6]_q$}&$\frac{(q-1)^2(q-2)(q-4)(q-6)(q-8)}{720}$&\tabincell{l}{2-$(q-1,q-7,\frac{(q-4)(q-6)(q-7)(q-8)^2}{720})$\\2-$(q-1,6,\frac{(q-4)(q-6)(q-8)}{24})$}
&$q=2^m,m>3$&\cite{Heng2023}\\
\midrule
\tabincell{l}{$[q-1,6,q-7]_q$\\$[q-1,q-7,6]_q$}&$\frac{(q-1)^2(q-2)(q-5)(q-6)(q-8)}{720}$&\tabincell{l}{2-$(q-1,q-7,\frac{(q-5)(q-6)(q-7)(q-8)^2}{720})$\\2-$(q-1,6,\frac{(q-5)(q-8)}{6})$}
&\tabincell{c}{$q=2^m,m>3$\\ is odd}&\cite{Heng2023}\\
\midrule
\tabincell{l}{$[q,3,q-3]_q$\\$[q,q-3,3]_q$}&$\frac{q(q-1)^2}{6}$&\tabincell{l}{2-$(q,q-3,\frac{(q-3)(q-4)}{6})$\\2-$(q,3,1)$}
&\tabincell{c}{$q=3^m,m\geq3$}&\cite{Xu2022}\\
\midrule
\tabincell{l}{$[q+1,q-3,4]_q$\\$[q+1,4,q-3]_q$}&$\frac{(q-4)(q-1)q(q+1)}{24}$&\tabincell{l}{2-$(q+1,4,\frac{q-4}{2})$\\ 2-$(q+1,q-3,\frac{(q-3)(q-4)^2}{24})$}&
\tabincell{c}{$q=2^m,m\geq4$\\ is even}&\cite{Ding2020}\\
\midrule
\tabincell{l}{$[q-1,k,q-1-k]_q$\\$[q-1,q-1-k,k]_q$}&$(q-1)N(k,0,\mathbb{F}_q^*)$&\tabincell{l}{2-$(q-1,q-1-k,\lambda_1)$\\
2-$(q-1,k,\lambda_1^c)$}&\tabincell{l}{$q=2^m,m\geq3$\\$3\leq k\leq q-4$}&Th.\ref{th4.1}\\
\bottomrule
 \end{tabular}
}
\end{table}

\begin{table}[h!]
{\small
\centering
\setlength{\tabcolsep}{1.8mm}
\caption{Some known NMDS codes supporting $3$-designs}
\label{tab2}
 \begin{tabular}{llllc}
\toprule
Parameters($\mathcal{C}/\mathcal{C}^\bot$)&$A_{n-k}=A_{k}^\bot$&$t$-$(n,n-k,\lambda)$/ $t$-$(n,k,\lambda^\bot)$&Constraint&Refs.\\
\midrule
\tabincell{c}{$[q,4,q-4]_q$\\$[q,q-4,4]_q$}&$\frac{q(q-1)^2(q-2)}{24}$&\tabincell{l}{3-$(q,q-4,\frac{(q-4)(q-5)(q-6)}{24})$\\ 3-$(q,4,1)$}
&$q=2^m$,$m\geq2$&\cite{Heng20231,Xu2022}\\
\midrule
\tabincell{c}{$[q,5,q-5]_q$\\$[q,q-5,5]_q$}&$\frac{q(q-1)^2(q-2)(q-8)}{120}$&\tabincell{l}{3-$(q,q-5,\frac{(q-5)(q-6)(q-7)(q-8)}{120})$\\3-$(q,5,\frac{q-8}{2})$}&
\tabincell{c}{$q=2^m,m>3$\\ is odd}&\cite{Heng2023}\\
\midrule
\tabincell{c}{$[q,6,q-6]_q$\\$[q,q-6,6]_q$}&$\frac{q(q-1)^2(q-2)(q-4)(q-8)}{720}$&\tabincell{l}{3-$(q,q-6,\frac{(q-4)(q-6)(q-7)(q-8)^2}{720})$\\3-$(q,6,\frac{(q-4)(q-8)}{6})$}&$q=2^m,m>3$ &\cite{Heng2023}\\
\midrule
\tabincell{c}{$[q,6,q-6]_q$\\$[q,q-6,6]_q$}&$\frac{q(q-1)^2(q-2)(q-5)(q-8)}{720}$&\tabincell{l}{3-$(q,q-6,\frac{(q-5)(q-6)(q-7)(q-8)^2}{720})$\\3-$(q,6,\frac{(q-5)(q-8)}{6})$}
&\tabincell{c}{$q=2^m,m>3$ \\is odd}&\cite{Heng2023}\\
\midrule
\tabincell{c}{$[q+1,q-3,4]_q$\\$[q+1,4,q-3]_q$}&$\frac{(q-1)^2q(q+1)}{24}$&\tabincell{l}{3-$(q+1,4,1)$\\3-$(q+1,q-3,\frac{(q-3)(q-4)(q-5)}{24})$}&$q=3^m,m\geq2$&\cite{Ding2020,Xu2022}\\
\midrule
\tabincell{c}{$[q+1,7,q-6]_q$\\$[q+1,q-6,7]_q$}&&\tabincell{l}{3-$(q+1,q-6,\lambda_2\tbinom{q-2}{7}/\tbinom{q-2}{4})$\\ 3-$(q+1,7,\lambda_2)$}&
\tabincell{l}{$q=3^m,m\geq3$,\\ $\lambda_2$ are some\\positive integers} &\cite{Xu2024}\\
\midrule
\tabincell{c}{$[q,k,q-k]_q$\\$[q,q-k,k]_q$}&$(q-1)N(k,0,\mathbb{F}_q)$&\tabincell{l}{3-$(q,q-k,\lambda_2)$\\
3-$(q,k,\lambda_2^c)$}&\tabincell{c}{$q=2^m,m\geq3$,\\$4\leq k\leq q-4$\\is even}&Th.\ref{th4.4}\\
\bottomrule
 \end{tabular}
}
\end{table}

\subsection{Organization of this Paper}
This paper is arranged as follows. Section \ref{sec.2} introduces some basic properties and known results related to NMDS codes and subset sums over finite fields. Section \ref{sec.3} constructs two new infinite families of NMDS codes with arbitrary dimensions. Section \ref{sec.4} presents two infinite families of NMDS codes with arbitrary dimensions supporting $2$-designs and $3$-designs. Section \ref{sec.5} makes some conclusions of this paper. Appendix provides an alternative proof of subset sum problem for binary case.

\section{Preliminaries}\label{sec.2}
In this section, we introduce some basic properties and known results related to NMDS codes and subset sums over finite fields, which are utilized subsequently. Unless stated otherwise, from this point forward, we will use the notation outlined below:
\begin{itemize}
  \item $\mathbb{F}_q$ denotes the finite field with $q$ elements, where $q=p^m$ is a prime power, $m \geq 3$ if $p=2$ and $m \geq 2$ otherwise.
  \item $\mathbb{F}_q^* =\mathbb{F}_q\setminus\{0\}$ is enumerated as $\{\theta_1, \theta_2, \ldots, \theta_{q-1}\}$ with pairwise distinct elements.
  \item $\mathcal{C}$ is an $[n,k,d]_q$ linear code, and $\mathcal{C}^\perp$ is the dual code of $\mathcal{C}$.
  \item $d({\C})$ (resp. $d({\C}^\perp$)) denotes the minimum Hamming distance of $\mathcal{C}$ (resp. $\mathcal{C}^\perp$).
  \item $A_i$ (resp.\ $A_i^\perp$) denotes the number of codewords with Hamming weight $i$ in $\mathcal{C}$ (resp.\ $\mathcal{C}^\perp$).
  \item $\#S$ denotes the cardinality of finite set $S$.
  \item $\mathbf{0}_n$ signifies the zero column vector in $\mathbb{F}_q^n$.
  \item $\det(M)$ and $\rank(G)$ denote the determinant of square matrix $M$ and the rank of matrix $G$, respectively.
  \item $\binom{n}{m}$ denotes the number of ways to choose $m$ distinct items from $n$ (where $0 \leq m \leq n$) without regard to order, calculated as $\frac{n!}{m!(n-m)!}$.
\end{itemize}
\subsection{NMDS codes}
For an $[n,k,n-k]_q$ NMDS code ${\C}$, let $(A_0,A_1,\ldots,A_n)$ and $(A_0^\perp,A_1^\perp,\ldots,A_n^\perp)$ denote the weight distributions of $\mathcal{C}$ and $\mathcal{C}^\perp$, respectively. We have the following known results.
\begin{lemma}\cite{DingT2022,Dodunekov1995}\label{le2.1}
Let $\mathcal{C}$ be an $[n,k,n-k]_q$ NMDS code. Then the weight distributions of $\mathcal{C}$ and $\mathcal{C}^\perp$ are given by
$$A_{k+t}^\perp=\binom{n}{k+t}\sum_{j=0}^{t-1}(-1)^j\binom{k+t}{j}(q^{t-j}-1)+(-1)^t\binom{n-k}{t}A_{k}^\perp$$
for $t\in\{1,2,\ldots,n-k\}$, and
$$A_{n-k+t}=\binom{n}{k-t}\sum_{j=0}^{t-1}(-1)^j\binom{n-k+t}{j}(q^{t-j}-1)+(-1)^t\binom{k}{t}A_{n-k}$$
for $t\in\{1,2,\ldots,k\}$.
\end{lemma}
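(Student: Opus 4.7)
The plan is to apply the classical Pless power moments, which for any $[n,k]_q$ linear code $\mathcal{C}$ state that
\[
\sum_{i=0}^{n}\binom{n-i}{s}A_i \;=\; q^{k-s}\sum_{j=0}^{s}\binom{n-j}{n-s}A_j^{\perp}, \qquad s=0,1,\ldots,n.
\]
These are equivalent to the MacWilliams identities. The overall strategy is to exploit them in order to reduce the determination of every $A_{n-k+t}$ to the single free parameter $A_{n-k}$, and dually every $A_{k+t}^{\perp}$ to the single parameter $A_k^{\perp}$.

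First I would record the input supplied by the NMDS hypothesis. Since $\mathcal{C}$ has parameters $[n,k,n-k]_q$, one has $A_0=1$ and $A_i=0$ for $1\leq i\leq n-k-1$; and since $\mathcal{C}^{\perp}$ is AMDS with parameters $[n,n-k,k]_q$, one also has $A_0^{\perp}=1$ and $A_j^{\perp}=0$ for $1\leq j\leq k-1$.

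Next, for each $s$ in the range $0\leq s\leq k-1$, the right-hand side of the Pless moment identity retains only the $j=0$ term, while the left-hand side retains only the contributions from $i=0$ and $i\geq n-k$. The identity therefore collapses to the triangular system
\[
\sum_{i=n-k}^{n}\binom{n-i}{s}A_i \;=\; (q^{k-s}-1)\binom{n}{s}, \qquad s=0,1,\ldots,k-1,
\]
consisting of $k$ linear equations in the $k+1$ unknowns $A_{n-k},A_{n-k+1},\ldots,A_n$. I would take $A_{n-k}$ as the free parameter and invert the Pascal-type coefficient matrix by standard binomial inversion. After the substitution $i=n-k+t$ this reproduces the second formula of the lemma: the particular solution supplies the alternating binomial-exponential sum $\binom{n}{k-t}\sum_{j}(-1)^j\binom{n-k+t}{j}(q^{t-j}-1)$, while the homogeneous part contributes the factor $(-1)^t\binom{k}{t}A_{n-k}$. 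Exchanging the roles of $\mathcal{C}$ and $\mathcal{C}^{\perp}$, i.e.\ running the same procedure on the AMDS code $\mathcal{C}^{\perp}$ with $A_k^{\perp}$ as the free parameter, yields the formula for $A_{k+t}^{\perp}$ after renaming indices.

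The principal obstacle is the combinatorial step of inverting the Pascal-type system and verifying that the output matches the specific alternating sums displayed in the statement. I would dispatch this either by induction on $t$ using the Pascal rule $\binom{n-i}{s}=\binom{n-i-1}{s}+\binom{n-i-1}{s-1}$ to peel off one unknown at a time, or by a direct inclusion-exclusion argument that counts incidences between codewords and $(k+t)$-subsets of coordinate positions, with a single Vandermonde convolution used at the end to collapse the internal sums into the claimed closed form.
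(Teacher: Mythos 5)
Your proposal is correct and coincides with the standard derivation in the cited sources \cite{DingT2022,Dodunekov1995}: the paper itself quotes this lemma without proof, and the classical argument is exactly the one you outline, namely feeding the vanishing weights $A_1=\cdots=A_{n-k-1}=0$ and $A_1^{\perp}=\cdots=A_{k-1}^{\perp}=0$ into the first $k$ (resp.\ $n-k$) Pless power moments and inverting the resulting triangular Pascal-type system with $A_{n-k}$ (resp.\ $A_k^{\perp}$) as the free parameter. The homogeneous solution $(-1)^t\binom{k}{t}$ follows from the trinomial revision $\binom{k-t}{s}\binom{k}{t}=\binom{k}{s}\binom{k-s}{t}$, so no new idea is needed beyond what you describe.
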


The following lemma establishes a natural correspondence between the minimum weight codewords of an NMDS code and those of its dual code.
\begin{lemma}\cite{DingT2022}\label{le2.2}
Let $\mathcal{C}$ be an $[n,k,n-k]_q$ NMDS code. Then for every minimum weight codeword $\mathbf{c}$ in $\mathcal{C}$, there exists, up to a multiple, a unique minimum weight codeword $\mathbf{c}^\perp$ in $\mathcal{C}^\perp$ such that ${\rm supp}(\mathbf{c})\cap {\rm supp}(\mathbf{c}^\perp)=\emptyset$. In particular, $\mathcal{C}$ and $\mathcal{C}^\perp$ have the same number of minimum weight codewords.
\end{lemma}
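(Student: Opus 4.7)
The plan is to reduce everything to a rank computation on a single $k\times k$ submatrix of a generator matrix of $\mathcal{C}$. I will first invoke the standard column characterization of an $[n,k,n-k]_q$ NMDS code (essentially due to Dodunekov--Landjev~\cite{Dodunekov1995}): for any generator matrix $G$ of $\mathcal{C}$, (a) every $k-1$ columns of $G$ are linearly independent, (b) some $k$ columns of $G$ are linearly dependent, and (c) every $k+1$ columns of $G$ have rank $k$. This is obtained by combining the AMDS condition for $\mathcal{C}$ (which yields (b) and (c)) with the AMDS condition for $\mathcal{C}^\perp$ (which, since $G$ is a parity-check matrix of $\mathcal{C}^\perp$, yields (a)).

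Given a minimum weight codeword $\mathbf{c}=\mathbf{u}G$ of weight $n-k$, I set $T=\{1,\dots,n\}\setminus\mathrm{supp}(\mathbf{c})$, so that $|T|=k$. The vanishing of $\mathbf{c}$ on $T$ translates to $\mathbf{u}G_T=\mathbf{0}$, where $G_T$ denotes the $k\times k$ submatrix of $G$ on the columns indexed by $T$. Property (a) forces $\mathrm{rank}(G_T)\ge k-1$, and the existence of a nonzero $\mathbf{u}$ in the left null space of $G_T$ forces $\mathrm{rank}(G_T)\le k-1$. Hence $\mathrm{rank}(G_T)=k-1$, the left null space of $G_T$ is one-dimensional, and $\mathbf{c}$ is uniquely determined by $T$ up to a nonzero scalar.

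Next I pass to the dual using that $G$ is a parity-check matrix of $\mathcal{C}^\perp$: codewords of $\mathcal{C}^\perp$ with support contained in $T$ correspond bijectively to column vectors in the right null space of $G_T$. By the same rank computation, this space is one-dimensional, so there exists a unique (up to scalar) nonzero $\mathbf{c}^\perp\in\mathcal{C}^\perp$ with support contained in $T$. Its weight is at most $|T|=k$ by the support restriction and at least $d(\mathcal{C}^\perp)=k$ by the NMDS hypothesis, so $\mathrm{wt}(\mathbf{c}^\perp)=k$ and in fact $\mathrm{supp}(\mathbf{c}^\perp)=T$, which is disjoint from $\mathrm{supp}(\mathbf{c})$ by construction.

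For the final counting assertion I will exploit symmetry: the assignment $[\mathbf{c}]\mapsto[\mathbf{c}^\perp]$ on projective classes of minimum-weight codewords of $\mathcal{C}$ is injective, because distinct classes produce distinct supports and hence distinct complements $T$, and $T$ in turn pins down $[\mathbf{c}^\perp]$. Running the same argument after swapping the roles of $\mathcal{C}$ and $\mathcal{C}^\perp$ gives a reverse injection, so the map is a bijection; multiplying by $q-1$ to pass from projective classes back to actual codewords yields $A_{n-k}=A_k^\perp$. The main subtlety is a careful statement of the NMDS column characterization used at the outset; once that is in place, the rest of the argument is essentially linear algebra combined with the dual distance identity $d(\mathcal{C}^\perp)=k$.
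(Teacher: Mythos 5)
The paper states this lemma only as a citation to \cite{DingT2022} and gives no proof of its own, so there is nothing internal to compare against; judged on its own, your argument is correct and is essentially the standard proof: the rank of the $k\times k$ submatrix $G_T$ on the zero set $T$ of a minimum-weight codeword is pinned to exactly $k-1$ by the NMDS column conditions, so both its left and right null spaces are one-dimensional, yielding the projective correspondence $[\mathbf{c}]\mapsto[\mathbf{c}^\perp]$ and hence $A_{n-k}=A_k^\perp$. The only cosmetic imprecision is the attribution of conditions (a)--(c): (a) and (b) are exactly the statement that $d(\mathcal{C}^\perp)=k$ (i.e.\ $\mathcal{C}^\perp$ is AMDS), while (c) encodes $d(\mathcal{C})\ge n-k$; this does not affect the validity of the argument, and (c) is in fact never used.
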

\subsection{The subset sum problem over finite fields}
Let $D \subseteq \mathbb{F}_q$ be an $n$-element subset. The \emph{subset sum problem} over $\mathbb{F}_q$ asks whether there exists a $k$-subset $\{x_1,\ldots, x_k\} \subseteq D$ such that $\sum_{i=1}^k x_i = b$ for given $b\in \mathbb{F}_q$. A natural extension involves determining the exact number of such $k$-subsets in finite fields. Specifically, we define the counting function
\begin{equation}\label{eq(2)}
N(k,b,D) = \#\left\{\{x_1, x_2, \ldots, x_k\} \subseteq D : \sum_{i=1}^k x_i = b \right\}.
\end{equation}
By means of a purely combinatorial method, Li and Wan \cite{Li2008} derived an explicit formula for $N(k,b,D)$ in two cases: $D=\mathbb{F}_q$ and $D =  \mathbb{F}_q^*$.

\begin{lemma}\cite{Li2008}\label{le2.5}
Let $\bF_q$ be a finite field of characteristic $p$, and let $b$ be a given element of ${\bF_q}$. Define $v(b)=-1$ if $b\neq0$, and $v(b)=q-1$ if $b=0$. If $1\leq k\leq q$, then
\begin{equation*}
\begin{aligned}
N(k,b,{\bF_q})=
\left\{
  \begin{array}{ll}
   \frac{1}{q}\binom{q}{k}, & \hbox{if $k$ is not multiple of $p$;} \\
    \frac{1}{q}\binom{q}{k}+(-1)^{k+k/p}\frac{v(b)}{q}\binom{q/p}{k/p}, & \hbox{if $k$ is multiple of $p$.}
  \end{array}
\right.
\end{aligned}
\end{equation*}
Furthermore, if $1\leq k\leq q-1$, then
\begin{equation*}
\begin{aligned}
N(k,b,{\bF_q^*})=\frac{1}{q}\binom{q-1}{k}+(-1)^{k+\lfloor k/p\rfloor}\frac{v(b)}{q}\binom{q/p-1}{\lfloor k/p\rfloor},
\end{aligned}
\end{equation*}
where $\lfloor\cdot\rfloor$ is the floor function.
\end{lemma}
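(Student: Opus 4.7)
The plan is to prove Lemma~\ref{le2.5} by expressing $N(k,b,D)$ as an additive-character sum. Fix any nontrivial additive character $\psi$ of $\mathbb{F}_q$, so that $\frac{1}{q}\sum_{t\in\mathbb{F}_q}\psi(tx)$ equals $1$ when $x=0$ and vanishes otherwise. Applying this orthogonality identity with $x=\sum_{y\in S}y-b$ and interchanging the sums over $t$ and over $k$-subsets $S\subseteq D$, the counting function rewrites as
$$
N(k,b,D)=\frac{1}{q}\sum_{t\in\mathbb{F}_q}\psi(-tb)\,[z^k]\!\!\prod_{x\in D}\bigl(1+z\,\psi(tx)\bigr),
$$
where $[z^k]$ denotes the coefficient of $z^k$. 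This reduces the proof to evaluating a single generating-function coefficient for each $t$.

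The second step is the identity
$$
\prod_{y\in\mathbb{F}_q}\bigl(1+z\,\psi(y)\bigr)=\Bigl(\prod_{\zeta^{p}=1}(1+z\zeta)\Bigr)^{q/p}=\bigl(1-(-z)^p\bigr)^{q/p},
$$
which holds because the trace is surjective, so $\psi(y)$ attains each $p$-th root of unity exactly $q/p$ times. When $D=\mathbb{F}_q$, the substitution $y\mapsto ty$ shows the inner product is independent of $t\neq 0$; separating the $t=0$ contribution and using $\sum_{t\neq 0}\psi(-tb)=v(b)$ yields
$$
N(k,b,\mathbb{F}_q)=\frac{1}{q}\binom{q}{k}+\frac{v(b)}{q}\,[z^k]\bigl(1-(-z)^p\bigr)^{q/p}.
$$
A direct binomial expansion shows that this coefficient vanishes unless $p\mid k$, in which case it simplifies to $(-1)^{k+k/p}\binom{q/p}{k/p}$, giving the first formula.

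For the case $D=\mathbb{F}_q^*$ one removes the factor at $y=0$, obtaining for $t\neq 0$
$$
\prod_{y\in\mathbb{F}_q^*}\bigl(1+z\,\psi(ty)\bigr)=\frac{\bigl(1-(-z)^p\bigr)^{q/p}}{1+z}.
$$
Multiplying by the geometric expansion $(1+z)^{-1}=\sum_{i\geq 0}(-z)^i$ and extracting $[z^k]$ reduces the calculation to the alternating partial sum $(-1)^k\sum_{j=0}^{\lfloor k/p\rfloor}(-1)^j\binom{q/p}{j}$. The standard identity $\sum_{j=0}^{M}(-1)^j\binom{N}{j}=(-1)^M\binom{N-1}{M}$ collapses this to $(-1)^{k+\lfloor k/p\rfloor}\binom{q/p-1}{\lfloor k/p\rfloor}$, and combining with the $t=0$ contribution $\binom{q-1}{k}/q$ produces the second formula.

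The main obstacle will be the sign accounting, because $(-z)^p=-z^p$ for odd $p$ but $(-z)^p=z^p$ for $p=2$, so the raw expansions in the two characteristic cases look superficially different. The uniform exponents $k+k/p$ and $k+\lfloor k/p\rfloor$ in the claimed formulas conceal this dichotomy, and I will verify in each parity that the signs emerging from the expansion collapse to the stated exponents. Once this bookkeeping is pinned down, the remainder of the argument is essentially forced by orthogonality together with the two elementary binomial identities above.
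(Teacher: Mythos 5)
Your character-sum argument is correct and complete in outline. The orthogonality reduction, the identity $\prod_{\zeta^{p}=1}(1+z\zeta)=1-(-z)^{p}$ (each value of $\psi$ being hit $q/p$ times because every nontrivial additive character factors through the surjective trace), the $t$-independence of the inner product for $t\neq 0$ via $y\mapsto ty$, the evaluation $\sum_{t\neq 0}\psi(-tb)=v(b)$, and the two coefficient extractions all check out; the sign bookkeeping you flag as the main obstacle is in fact handled uniformly by never expanding $(-z)^{p}$ by the parity of $p$: the coefficient of $z^{k}$ in $\bigl(1-(-z)^{p}\bigr)^{q/p}$ is $(-1)^{k+k/p}\binom{q/p}{k/p}$ when $p\mid k$ in every characteristic, and the partial-sum identity $\sum_{j=0}^{M}(-1)^{j}\binom{N}{j}=(-1)^{M}\binom{N-1}{M}$ delivers the $\mathbb{F}_q^{*}$ case exactly as stated. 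One point worth making explicit in a final write-up is that $1+z$ divides $1-(-z)^{p}$ as a polynomial (since $z=-1$ is a root), so the division in the $\mathbb{F}_q^{*}$ case is exact and the formal manipulation is legitimate. This is, however, a genuinely different route from anything in the paper: the lemma is quoted from Li and Wan, whose derivation (like Pavone's) is purely combinatorial, and the paper's own contribution in the Appendix is yet another combinatorial proof restricted to $p=2$ and $b=0$ — it partitions the $k$-subsets of $\mathbb{F}_{2^m}$ by the row-parity vector of an associated $m\times k$ binary matrix, shows by complementation bijections that all $2^{m}$ parity classes are equinumerous when $k$ is odd, obtains recurrences in $k$ by a normalization trick for even $k$, and solves them by induction. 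Your approach buys the full statement (arbitrary $p$ and arbitrary $b$) in a few lines and makes the dichotomy $p\mid k$ versus $p\nmid k$ transparent; the paper's approach is confined to the binary homogeneous case but produces as byproducts the recurrence relations of Lemma \ref{th5.1}, a combinatorial identity, and the divisibility statements of Corollary \ref{coro5.5} that the design-theoretic arguments of Section \ref{sec.4} actually rely on.
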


In 2021, Pavone \cite{Pavone2021} provided an alternative proof of the above two formulas by using a different combinatorial approach. In this paper's Appendix, we also present an alternative algebraic proof of $N(k,0,\mathbb{F}_{2^m}^*)$ and $N(k,0,\mathbb{F}_{2^m})$, and derive some new explicit formulas equivalent to those in Lemma \ref{le2.5} for binary case. Moreover, some new recurrence relations about $N(k,0,\mathbb{F}_{2^m}^*)$ and a combinatorial identity are obtained. For more results on subset sum problem, see \cite{Etzion1994,Falcone2021,Kosters2013,Li2012}.

The following lemma establishes an explicit formula for computing the generalized Vandermonde determinant with one omitted row.
\begin{lemma}\cite{Heng2023,Heineman1929}\label{le2.6}
Let $0\leq i\leq k$ be an integer. Then the following determinant identity holds:
\begin{equation}\label{Vandermonde}
\det\left(
\begin{matrix}
1 & 1 & \cdots & 1 \\
x_1 & x_2 & \cdots & x_k \\
\vdots & \vdots & \ddots & \vdots \\
x_1^{i-1} & x_2^{i-1} & \cdots & x_k^{i-1} \\
x_1^{i+1} & x_2^{i+1} & \cdots & x_k^{i+1} \\
\vdots & \vdots & \ddots & \vdots \\
x_1^k & x_2^k & \cdots & x_k^k
\end{matrix}
\right)
= \left(\prod_{1\leq m<j\leq k}(x_j-x_m)\right)\sum_{\substack{1\leq j_1<\cdots<j_{k-i}\leq k}}x_{j_1}\cdots x_{j_{k-i}}.
\end{equation}
\end{lemma}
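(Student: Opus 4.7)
The plan is to realize the omitted-row determinant as a coefficient extracted from a larger, ordinary Vandermonde determinant. Introduce an auxiliary indeterminate $t$ and form the $(k+1)\times(k+1)$ matrix whose $j$-th column is $(1,x_j,x_j^2,\ldots,x_j^k)^{\top}$ for $j=1,\ldots,k$, and whose $(k+1)$-th column is $(1,t,t^2,\ldots,t^k)^{\top}$. Its determinant is the classical Vandermonde
\begin{equation*}
V(x_1,\ldots,x_k,t)\;=\;\Bigl(\prod_{1\leq m<j\leq k}(x_j-x_m)\Bigr)\cdot\prod_{j=1}^{k}(t-x_j).
\end{equation*}

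First, I would expand $V$ by cofactors along its last column. The cofactor of the entry $t^i$ (in row $i+1$, column $k+1$) is $(-1)^{(i+1)+(k+1)}=(-1)^{i+k}$ times the $k\times k$ minor obtained by deleting that row and column, which is exactly the determinant $D_i$ on the left-hand side of the claimed identity. This yields the polynomial identity
\begin{equation*}
\sum_{i=0}^{k}(-1)^{i+k}\,t^i\,D_i\;=\;\Bigl(\prod_{1\leq m<j\leq k}(x_j-x_m)\Bigr)\cdot\prod_{j=1}^{k}(t-x_j).
\end{equation*}

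Next, I would expand the product on the right via $\prod_{j=1}^{k}(t-x_j)=\sum_{i=0}^{k}(-1)^{k-i}\,e_{k-i}(x_1,\ldots,x_k)\,t^i$, where $e_{k-i}$ is the elementary symmetric polynomial $\sum_{1\leq j_1<\cdots<j_{k-i}\leq k}x_{j_1}\cdots x_{j_{k-i}}$ appearing on the right-hand side of the lemma. Comparing coefficients of $t^i$, the signs cancel since $(-1)^{i+k}=(-1)^{k-i}$, and we recover
\begin{equation*}
D_i\;=\;\Bigl(\prod_{1\leq m<j\leq k}(x_j-x_m)\Bigr)\sum_{1\leq j_1<\cdots<j_{k-i}\leq k}x_{j_1}\cdots x_{j_{k-i}},
\end{equation*}
which is the asserted formula.

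No step should present serious difficulty; the only delicate point is the sign bookkeeping in the cofactor expansion, and I would include an explicit check at $i=k$ (recovering the standard Vandermonde, $e_0=1$) and at $i=0$ (where the first row of $1$'s is missing, so a factor $x_1\cdots x_k=e_k$ should appear) as sanity checks. An alternative route would be induction on $k$ using column operations to relate the $i$-th case to the $(i-1)$-th, but the cofactor argument above is cleaner because the elementary symmetric polynomials emerge directly from the factorization of $\prod_j(t-x_j)$.
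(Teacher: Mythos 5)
Your proof is correct. Note that the paper does not actually prove this lemma --- it is quoted from \cite{Heng2023,Heineman1929} as a known identity for generalized Vandermonde determinants --- so there is no internal proof to compare against. Your argument is the standard and cleanest derivation: bordering the matrix with the column $(1,t,\ldots,t^k)^{\top}$, factoring the resulting $(k+1)\times(k+1)$ Vandermonde determinant as $\bigl(\prod_{1\leq m<j\leq k}(x_j-x_m)\bigr)\prod_{j=1}^{k}(t-x_j)$, and matching the coefficient of $t^i$ from the cofactor expansion (sign $(-1)^{i+k}$) against the coefficient $(-1)^{k-i}e_{k-i}(x_1,\ldots,x_k)$ from the product; the signs cancel exactly as you say, and your sanity checks at $i=k$ (ordinary Vandermonde, $e_0=1$) and $i=0$ (extra factor $x_1\cdots x_k=e_k$) confirm the bookkeeping. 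This is essentially Heineman's classical argument, and it is complete as written.
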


In Lemma \ref{le2.6}, if $\{x_1,x_2,\ldots,x_k\}\subseteq D$ is a $k$-subset of $\mathbb{F}_q$ with $i=k-1$, then the number of solutions for which determinant (5) vanishes precisely equals $N(k,0,D)$ defined in Eq.(\ref{eq(2)}).

\section{New infinite families of NMDS codes with arbitrary dimensions}\label{sec.3}
In this section, we construct two new infinite families of NMDS codes with arbitrary dimensions over ${\bF_q}$ for any prime power $q$.
\subsection{The first infinite family of NMDS codes with parameters $[q+1,k,q+1-k]_q$}
Let ${\C_1}$ denote the linear code over ${\bF_q}$ with generator matrix
\begin{equation}\label{G1}
\begin{aligned}
G_1 = \begin{pmatrix}
  \theta_1^{k-1}  & \cdots & \theta_{q-1}^{k-1} & 1 & 0 \\
  \theta_1^{k-2}  & \cdots & \theta_{q-1}^{k-2} & 0 & 1 \\
  \theta_1^{k-3}  & \cdots & \theta_{q-1}^{k-3} & 0 & 0 \\
  \vdots                 & \ddots & \vdots              & \vdots & \vdots \\
  \theta_1              & \cdots & \theta_{q-1}       & 0 & 0 \\
  1                            & \cdots & 1                  & 0 & 0
\end{pmatrix},
\end{aligned}
\end{equation}
where $4\leq k\leq q-3$ if $p=2$ and $3\leq k\leq q-2$ if $p\neq2$.
In the following, we prove that ${\C_1}$ is an NMDS code with parameters $[q+1,k,q+1-k]_q$ and determine the weight distribution of this code. We first consider the parameters of the dual of ${\C_1}$.

\begin{lemma}\label{le5}
Let $\mathcal{C}_1$ be a linear code over ${\bF_q}$ with generator matrix $G_1$ defined in (\ref{G1}). Then $d(\mathcal{C}_1^\perp) \geq k$.
\end{lemma}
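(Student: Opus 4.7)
The plan is to show that every $k-1$ columns of $G_1$ are linearly independent, since $G_1$ is the parity-check matrix of $\mathcal{C}_1^\perp$, and so the minimum distance of $\mathcal{C}_1^\perp$ equals the minimum number of linearly dependent columns of $G_1$. Concretely, I will fix any choice of $k-1$ columns of $G_1$, assemble them into a $k\times(k-1)$ submatrix $M$, and exhibit a $(k-1)\times(k-1)$ submatrix of $M$ whose determinant is nonzero.

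I would split the argument into four cases according to whether the two special columns, call them $\bu=(1,0,\ldots,0)^T$ (column $q$) and $\bv=(0,1,0,\ldots,0)^T$ (column $q+1$), are among the selected columns. In \textbf{Case 1}, when neither $\bu$ nor $\bv$ is chosen, the $k-1$ selected columns come from the first $q-1$ columns, so deleting the top row of $M$ yields a $(k-1)\times(k-1)$ standard Vandermonde matrix in $k-1$ distinct elements $\theta_{i_1},\ldots,\theta_{i_{k-1}}\in\bF_q^*$, which has determinant $\prod_{1\le s<t\le k-1}(\theta_{i_t}-\theta_{i_s})\neq 0$. In \textbf{Case 2}, when $\bu$ is chosen but $\bv$ is not, delete the second row of $M$; the remaining column $\bu$ contributes a single nonzero entry in the top row, so Laplace expansion along this column reduces the determinant to that of the $(k-2)\times(k-2)$ Vandermonde formed by the remaining $k-2$ Vandermonde-type columns restricted to rows $3,4,\ldots,k$. \textbf{Case 3} ($\bv$ chosen but $\bu$ not) is symmetric: delete the first row and expand along $\bv$.

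In \textbf{Case 4}, when both $\bu$ and $\bv$ are chosen, we have $k-3$ Vandermonde-type columns together with $\bu$ and $\bv$; delete the third row of $M$. After deletion the columns $\bu$ and $\bv$ become standard basis vectors supported on the top two rows, so a double Laplace expansion along these two columns reduces the $(k-1)\times(k-1)$ determinant to the $(k-3)\times(k-3)$ Vandermonde on the remaining $k-3$ Vandermonde-type columns restricted to rows $4,5,\ldots,k$, which is again nonzero as the $\theta_{i_j}$'s are distinct elements of $\bF_q^*$. This requires $k-3\ge 0$, which is consistent with the range of $k$ allowed in the statement of the lemma; note that the case $k-3=0$ is degenerate but still valid, since one then just has the two linearly independent columns $\bu,\bv$.

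I do not expect any step to be a serious obstacle. The only mild subtlety is getting the row-deletion bookkeeping right in Cases 2--4 so that the special columns end up with exactly the single nonzero entry in a position that makes the Laplace expansion legitimate, and then recognizing that the resulting submatrix of the first $q-1$ columns is a true (not generalized) Vandermonde in consecutive powers $0,1,\ldots$, which forces its determinant to be nonzero. Once these four cases are verified, we conclude that no $k-1$ columns of $G_1$ are linearly dependent, hence $d(\mathcal{C}_1^\perp)\ge k$.
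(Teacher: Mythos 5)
Your proposal is correct and follows essentially the same route as the paper: both reduce the claim to showing that any $k-1$ columns of $G_1$ are linearly independent and verify this by exhibiting a nonzero $(k-1)\times(k-1)$ minor, with the paper writing out only the hardest case (both special columns selected, your Case 4) and deleting the bottom row instead of the third, which yields a generalized Vandermonde with an extra factor $\prod_j\theta_{i_j}\neq 0$ rather than your genuine Vandermonde. The difference is purely cosmetic; your version is, if anything, slightly more complete since it spells out all four cases.
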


\begin{proof}
By Corollary 1.4.14 of \cite{Huffman2003}, it suffices to verify that any $k-1$ columns of $G_1$ are linearly independent over $\mathbb{F}_q$. We prove the case involving $k-3$ columns from the first $q-1$ columns and both last two columns of $G_1$. Other cases follow similar arguments.

Consider columns $\{\theta_{i_1}, \ldots, \theta_{i_{k-3}}\}$ from the first $q-1$ columns, along with the $q$-th and $(q+1)$-th columns of $G_1$. The corresponding $k \times (k-1)$ submatrix is
\[
G_{1,1} = \begin{pmatrix}
    \theta_{i_1}^{k-1}  & \cdots & \theta_{i_{k-3}}^{k-1} & 1 & 0 \\
    \theta_{i_1}^{k-2}  & \cdots & \theta_{i_{k-3}}^{k-2} & 0 & 1 \\
    \vdots & \ddots & \vdots & \vdots & \vdots \\
    \theta_{i_1} & \cdots & \theta_{i_{k-3}} & 0 & 0 \\
    1 & \cdots & 1 & 0 & 0
\end{pmatrix}.
\]
Removing the last row gives the determinant
\[
\det\begin{pmatrix}
    \theta_{i_1}^{k-1} & \cdots & \theta_{i_{k-3}}^{k-1} & 1 & 0 \\
    \theta_{i_1}^{k-2} & \cdots & \theta_{i_{k-3}}^{k-2} & 0 & 1 \\
    \vdots & \ddots & \vdots & \vdots & \vdots \\
    \theta_{i_1} & \cdots & \theta_{i_{k-3}} & 0 & 0
\end{pmatrix} = (-1)^{\frac{(k-3)(k-4)}{2}} \prod_{j=1}^{k-3}\theta_{i_j} \cdot \prod_{1\leq a<b\leq k-3}(\theta_{i_b} - \theta_{i_a}).
\]
Since $\theta_{i_j}$ are distinct non-zero elements, the determinant is non-zero in $\mathbb{F}_q$. It is easy to check that $G_{1,1}$ has no subdeterminant of order $k$. It follows that any $k-3$ columns from the first $q-1$ columns of $G_1$ together with the $q$-th and $(q+1)$-th columns are ${\bF_q}$-linearly independent.
\end{proof}

\begin{lemma}\label{le6}
Let $\mathcal{C}_1$ be a linear code over ${\bF_q}$ with generator matrix $G_1$ defined in (\ref{G1}). Then $d(\mathcal{C}_1^{\perp})= k$ and the number of its minimum weight codewords is
$$
 A_k^\perp =\frac{q-1}{q}\binom{q-1}{k-1}+(-1)^{k-1+\lfloor\frac{k-1}{p}\rfloor}\frac{(q-1)^2}{q}\binom{\frac{q}{p}-1}{\lfloor\frac{k-1}{p}\rfloor}.
$$
\end{lemma}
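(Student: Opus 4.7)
The plan is to convert the weight-$k$ enumeration in $\mathcal{C}_1^\perp$ into a count of linearly dependent $k$-subsets of columns of $G_1$, and then to carry out a short case analysis. Since $G_1$ is a parity-check matrix for $\mathcal{C}_1^\perp$, every codeword of weight exactly $k$ in $\mathcal{C}_1^\perp$ corresponds to a $k$-subset $S$ of column indices of $G_1$ whose columns admit a nontrivial linear relation; the coefficients of that relation (padded with zeros in the positions outside $S$) are the codeword. By Lemma \ref{le5} any $k-1$ columns of $G_1$ are linearly independent, so for any such $S$ the corresponding $k\times k$ submatrix has kernel of dimension exactly one, and every nonzero kernel vector has all $k$ coordinates nonzero (else one would obtain a dependence among fewer than $k$ columns). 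Hence each dependent $k$-subset contributes exactly $q-1$ codewords of weight $k$, giving $A_k^\perp=(q-1)\,N_k$, where $N_k$ is the number of linearly dependent $k$-subsets of columns of $G_1$.

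I would then split $N_k$ according to how many of the two special columns---the $q$-th column $e_1=(1,0,\dots,0)^{T}$ and the $(q+1)$-th column $e_2=(0,1,0,\dots,0)^{T}$---lie in $S$: (I) neither, so all $k$ columns are chosen from the first $q-1$; (IIa) $k-1$ from the first $q-1$ together with $e_1$; (IIb) $k-1$ from the first $q-1$ together with $e_2$; (III) $k-2$ from the first $q-1$ together with both $e_1$ and $e_2$. In case (I), after reversing the rows the submatrix is a standard $k\times k$ Vandermonde in distinct $\theta_{i_1},\dots,\theta_{i_k}$ and hence nonsingular. In case (IIa), Laplace expansion along $e_1$ reduces the determinant to a standard Vandermonde of size $k-1$ in distinct nonzero $\theta_{i_j}$; in case (III), a two-step Laplace expansion along $e_1$ and $e_2$ reduces it to a standard Vandermonde of size $k-2$. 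None of these vanishes, so these three cases contribute nothing to $N_k$.

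The essential case is (IIb). Expanding along $e_2$ deletes row $2$ and leaves a $(k-1)\times(k-1)$ determinant whose rows collect the powers $\{0,1,\dots,k-1\}\setminus\{k-2\}$ of $\theta_{i_1},\dots,\theta_{i_{k-1}}$. By Lemma \ref{le2.6} applied with the missing exponent $i=k-2$ (and ``$k$'' there specialized to $k-1$), this determinant equals
$\bigl(\prod_{1\leq a<b\leq k-1}(\theta_{i_b}-\theta_{i_a})\bigr)\cdot(\theta_{i_1}+\cdots+\theta_{i_{k-1}})$,
which vanishes precisely when the chosen $(k-1)$-subset of $\mathbb{F}_q^*$ sums to zero. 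Thus by Eq.~(\ref{eq(2)}) one obtains $N_k=N(k-1,0,\mathbb{F}_q^*)$, and inserting the Li--Wan formula from Lemma \ref{le2.5} (with $b=0$, so $v(b)=q-1$) into $A_k^\perp=(q-1)\,N(k-1,0,\mathbb{F}_q^*)$ reproduces exactly the stated closed form. Strict positivity of this expression on the admissible range of $k$, combined with Lemma \ref{le5}, then forces $d(\mathcal{C}_1^\perp)=k$. The principal delicacy of the argument will be the bookkeeping of Laplace expansions across the four cases and, above all, the correct identification of the missing-row exponent in the application of Lemma \ref{le2.6} to case (IIb), since that is exactly what converts the dependency question into the subset-sum condition that Lemma \ref{le2.5} evaluates in closed form.
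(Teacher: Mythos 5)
Your proposal is correct and follows essentially the same route as the paper: both reduce $A_k^\perp$ to counting linearly dependent $k$-subsets of columns of $G_1$, dispose of the cases involving none, the $q$-th, or both special columns via ordinary Vandermonde determinants, and identify the case with the $(q+1)$-th column as the one where Lemma \ref{le2.6} converts the dependency condition into $\theta_{i_1}+\cdots+\theta_{i_{k-1}}=0$, counted by $N(k-1,0,\mathbb{F}_q^*)$ from Lemma \ref{le2.5}. Your explicit kernel-dimension argument for the factor $q-1$ and the observation that positivity of $N(k-1,0,\mathbb{F}_q^*)$ on the admissible range of $k$ is what yields $d(\mathcal{C}_1^\perp)=k$ are both sound and match what the paper does implicitly.
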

\begin{proof}
We first prove that $d(\mathcal{C}_1^{\perp})= k$. By Lemma \ref{le5}, we only need to show that there exist codewords $\overline{\mathbf{c}}\in\mathcal{C}_1^\perp$ with Hamming weight $k$. Next, we present the following analysis.

Assume that the $k$ nonzero coordinates of codewords $\bar{\mathbf{c}}\in\mathcal{C}_1^\bot$ are all at the first $q-1$ locations, i.e.,
$$\bar{\mathbf{c}}=(\underbrace{0,\ldots,0,c_{i_1},0,\ldots,0,c_{i_2},\ldots,0,\ldots,0,c_{i_{k-1}},0,\ldots,0,c_{i_k},0,\ldots}_{q-1},0,0),$$
then $\bar{\mathbf{c}}\cdot G_1^\top=\mathbf{0}_{k}$. Precisely, we have the following homogeneous system of equations holds:
\begin{equation}\label{h-system1}
\begin{aligned}
G_{2,1}\cdot(c_{i_1},c_{i_2},\ldots,c_{i_{k-1}},c_{i_k})^\top=\mathbf{0}_k,
\end{aligned}
\end{equation}
where
$$
G_{2,1}=
\begin{pmatrix}
    \theta_{i_1}^{k-1}  & \cdots & \theta_{i_k}^{k-1}  \\
    \vdots&\ddots&\vdots\\
    \theta_{i_1}  & \cdots & \theta_{i_k}  \\
    1&\cdots&1\\
 \end{pmatrix}.
$$
Then $\rank(G_{2,1})=k$ follows that
$$\det(G_{2,1})=(-1)^{\frac{k(k-1)}{2}}\prod_{1\leq b<a\leq k}(\theta_{i_a}-\theta_{i_b})\neq0.$$
This implies that the homogeneous system of equations (\ref{h-system1}) has and only has zero solutions, which is a contradiction.
Therefore,  $\mathcal{C}_1^\bot$ contains no codewords of weight $k$ with all nonzero coordinates in the first $q-1$ positions. By analogous arguments, $\mathcal{C}_1^\bot$ also lacks codewords of weight $k$, where
\begin{itemize}
\item $k-1$ nonzero coordinates lie in the first $q-1$ positions and one at the $q$-th position;
\item $k-2$ nonzero coordinates are in the first $q-1$ positions, with the last two at the
$q$-th and $(q+1)$-th positions.
\end{itemize}

Building upon the preceding analysis and Lemma \ref{le5}, we claim that there exist codewords with Hamming weight $k$ in $\mathcal{C}_1^\perp$ whose first $k-1$ non-zero coordinates lie within the first $q-1$ positions, with the remaining non-zero coordinate positioned at the $(q+1)$-th location. Indeed, let
$$\bar{\mathbf{c}}'=(\underbrace{0,\ldots,0,c_{i_1},0,\ldots,0,c_{i_2},\ldots,0,\ldots,0,c_{i_{k-1}},0,\ldots,0}_{q-1},0,c_{i_k}),$$
then $\bar{\mathbf{c}}'\cdot G_1^\top=\mathbf{0}_{k}$. Precisely, we have the following homogeneous system of equations holds:
\begin{equation*}
\begin{aligned}
G_{2,2}\cdot(c_{i_1},c_{i_2},\ldots,c_{i_{k-1}},c_{i_k})^\top=\mathbf{0}_k,
\end{aligned}
\end{equation*}
where
$$G_{2,2}=
\begin{pmatrix}
    \theta_{i_1}^{k-1} & \cdots & \theta_{i_{k-1}}^{k-1}& 0  \\
    \theta_{i_1}^{k-2}  & \cdots & \theta_{i_{k-1}}^{k-2}& 1  \\
    \vdots&\ddots&\vdots&\vdots\\
    \theta_{i_1}  & \cdots & \theta_{i_{k-1}}& 0  \\
    1&\cdots&1&0\\
\end{pmatrix}.
$$
By Lemma \ref{le2.6}, we have
$$\det(G_{2,2}) = (-1)^{\frac{k(k-1)}{2} + 1} \sum_{j=1}^{k-1} \theta_{i_j} \cdot \prod_{\substack{1 \leq a < b \leq k-1}} (\theta_{i_b} - \theta_{i_a}).$$
Then $\det(G_{2,2})=0$ if and only if
$\sum_{j=1}^{k-1}\theta_{i_j}=0.$
By Lemma~\ref{le2.5}, the number of distinct $(k-1)$-subsets $\{\theta_{i_1}, \ldots, \theta_{i_{k-1}}\} \subseteq \mathbb{F}_q^*$ satisfying $\det(G_{2,2}) = 0$ equals $N(k-1, 0, \mathbb{F}_q^*)$, which directly implies $\rank(G_{2,2}) = k-1$. Notably, Lemma~\ref{le2.5} establishes that $N(k-1, 0, \mathbb{F}_q^*) = 0$ if and only if either: (i)~$k \in \{3, q-2, q-1\}$ for $p=2$, or (ii)~$k \in \{2, q-1\}$ for $p \neq 2$.

Consequently, there exist codewords in $\mathcal{C}_1^\perp$ with Hamming weight $k$. Combining with Lemma \ref{le5},
we can deduce that $d(\mathcal{C}_1^\perp)=k$. By Lemma~\ref{le2.5}, the total number of codewords with minimum weight $k$ in $\mathcal{C}_1^\perp$ is
\begin{equation*}
A_k^{\bot}=(q-1)N(k-1,0,{\bF_q^*})
=\frac{q-1}{q}\binom{q-1}{k-1}+(-1)^{k-1+\lfloor\frac{k-1}{p}\rfloor}\frac{(q-1)^2}{q}\binom{\frac{q}{p}-1}{\lfloor\frac{k-1}{p}\rfloor},\\
\end{equation*}
where $4\leq k\leq q-3$ if $p=2$ and $3\leq k\leq q-2$ if $p\neq2$.
\end{proof}
Based on the results of Lemmas \ref{le5} and \ref{le6}, we have the following theorem.
\begin{theorem}\label{th3.1}
Let $\mathcal{C}_1$ be a linear code over ${\bF_q}$ with generator matrix $G_1$ defined in (\ref{G1}). Then $\mathcal{C}_1$ is an NMDS code with parameters $[q+1,k,q+1-k]_q$, has exactly
$$
A_{q+1-k} = \frac{q-1}{q}\binom{q-1}{k-1}+(-1)^{k-1+\lfloor\frac{k-1}{p}\rfloor}\frac{(q-1)^2}{q}\binom{\frac{q}{p}-1}{\lfloor\frac{k-1}{p}\rfloor}
$$
minimum weight codewords, and its weight distribution is determined by Lemma \ref{le2.1}.
\end{theorem}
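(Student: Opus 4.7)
The plan is to derive the NMDS property by combining Lemma~\ref{le6} (which already gives $d(\mathcal{C}_1^\perp) = k$) with a direct polynomial-based analysis that establishes $d(\mathcal{C}_1) = q+1-k$; the remaining data then follows from Lemmas~\ref{le2.2} and~\ref{le2.1}.

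First I would parametrize a codeword as $\mathbf{c} = \mathbf{m} G_1$ with $\mathbf{m} = (m_1, \ldots, m_k) \in \mathbb{F}_q^k$, so that the first $q-1$ coordinates of $\mathbf{c}$ are the evaluations $f(\theta_j)$ of
\[
f(x) = m_1 x^{k-1} + m_2 x^{k-2} + \cdots + m_{k-1}x + m_k,
\]
while the last two coordinates are $c_q = m_1$ and $c_{q+1} = m_2$. A Vandermonde minor from the first $k$ columns of $G_1$ confirms that $\mathcal{C}_1$ has length $q+1$ and dimension $k$. I would then split on whether $m_1$ and $m_2$ vanish. In the three cases $m_1 = m_2 = 0$, $m_1 = 0 \neq m_2$, and $m_1 m_2 \neq 0$, a degree bound on $f$ combined with the contributions from positions $q$ and $q+1$ yields at most $k-1$ zero coordinates in total, forcing $\wt(\mathbf{c}) \geq n-k+1$. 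Only the case $m_1 \neq 0 = m_2$ can drop the weight to $n-k$, and the minimum is attained precisely when $f$ splits into $k-1$ distinct linear factors over $\mathbb{F}_q^*$ whose roots sum to zero (the condition $m_2 = 0$ translated through Vieta). The existence of such $f$ in the stated range of $k$ is exactly the statement $N(k-1, 0, \mathbb{F}_q^*) > 0$, which was already verified during the proof of Lemma~\ref{le6}. Therefore $d(\mathcal{C}_1) = q+1-k$, and combined with $d(\mathcal{C}_1^\perp) = k$ from Lemma~\ref{le6} we conclude that $\mathcal{C}_1$ is NMDS with parameters $[q+1,k,q+1-k]_q$.

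Finally, Lemma~\ref{le2.2} yields $A_{q+1-k} = A_k^\perp$, whose closed form is already recorded in Lemma~\ref{le6} and matches the stated expression; alternatively one may verify the count directly by observing that each minimum-weight codeword corresponds bijectively to a pair consisting of $m_1 \in \mathbb{F}_q^*$ together with a $(k-1)$-subset of $\mathbb{F}_q^*$ summing to zero, yielding exactly $(q-1)\,N(k-1,0,\mathbb{F}_q^*)$ such codewords. The remaining weight distribution then follows by substituting $A_{q+1-k}$ into the NMDS recursions of Lemma~\ref{le2.1}. The principal obstacle is the four-case weight analysis itself, where one must carefully juggle the degree of the evaluation polynomial $f$ with the zero-pattern at the two distinguished positions $c_q = m_1$ and $c_{q+1} = m_2$ to avoid overcounting zeros; once this bookkeeping is settled, the minimum-weight count reduces immediately to the subset-sum quantity $N(k-1,0,\mathbb{F}_q^*)$.
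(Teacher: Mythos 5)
Your proof is correct and rests on the same engine as the paper's: Vandermonde minors for the dimension, degree counting of the evaluation polynomial $f$ for the weight bound, and the subset-sum quantity $N(k-1,0,\mathbb{F}_q^*)$ for the count. The organization differs in one worthwhile respect. The paper rules out $\geq k+1$ zero coordinates by a three-case contradiction, obtains only $d(\mathcal{C}_1)\geq q-k+1$, and then needs the Singleton bound together with the fact that $\mathcal{C}_1^\perp$ is AMDS (not MDS) to exclude $d(\mathcal{C}_1)=q-k+2$; the minimum-weight count is then imported from the dual via Lemma~\ref{le2.2} and Lemma~\ref{le6}. You instead classify codewords by the vanishing pattern of $(m_1,m_2)$, observe that only the case $m_1\neq 0=m_2$ can reach $k$ zeros, and characterize attainment via Vieta: $f=m_1\prod_{i=1}^{k-1}(x-r_i)$ with distinct $r_i\in\mathbb{F}_q^*$ summing to zero. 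This simultaneously proves $d(\mathcal{C}_1)=q+1-k$ (given $N(k-1,0,\mathbb{F}_q^*)>0$ in the stated range, as checked in Lemma~\ref{le6}) and yields $A_{q+1-k}=(q-1)N(k-1,0,\mathbb{F}_q^*)$ by a direct bijection on the primal side, making both the Singleton/dual-MDS step and Lemma~\ref{le2.2} dispensable. The only point to state explicitly in a write-up is that a root of $f$ at $0$ contributes no zero coordinate (evaluations are over $\mathbb{F}_q^*$), so exactly $k-1$ \emph{distinct nonzero} roots are required; with that noted, the bookkeeping you flag as the main obstacle goes through cleanly.
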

\begin{proof}
Since $\theta_{i_b}\neq\theta_{i_a}$ for any $1 \leq i_a\neq i_b\leq k$, we know that the determinant of the submatrix of $G_1$ formed by selecting \( k \) columns \( \theta_{i_1}, \ldots, \theta_{i_k} \) is
\[
\det\left(\begin{matrix}
    \theta_{i_1}^{k-1} & \cdots & \theta_{i_k}^{k-1} \\
    \vdots & \ddots & \vdots \\
    \theta_{i_1} & \cdots & \theta_{i_k} \\
    1 & \cdots & 1
\end{matrix}\right)
= (-1)^{\frac{k(k-1)}{2}} \prod_{1 \leq {i_a} < {i_b} \leq k} (\theta_{i_b} - \theta_{i_a}) \neq 0.
\]
Then $\rank(G_1)=k$. Hence, ${\rm dim}(\mathcal{C}_1)=\rank(G_1)=k$ follows that $G_1$ has no subdeterminant of order $k+1$.  This implies that $\dim{(\mathcal{C}_1^\perp)}=q+1-\dim{(\mathcal{C}_1)}=q+1-k$. By Lemma \ref{le6}, $\mathcal{C}_1^\perp$ is an AMDS code with parameters $[q+1,q+1-k,k]_q$.

We now prove that $d(\mathcal{C}_1)=q+1-k$. Assume towards contradiction that $d(\mathcal{C}_1)\leq q-k$. Let $\mathbf{c}=a_1\mathbf{g}_1+\cdots+a_k\mathbf{g}_k$ be a minimum weight nonzero codeword, where $\mathbf{g}_i$ is the $i$-th row of $G_1$. By assumption, $\mathbf{c}$ has at least $k+1$ zero coordinates. We analyze three cases:

\begin{itemize}
  \item \noindent {\bf Case 1:}~Suppose $\mathbf{c}$ has $k-1$ zeros among the first $q-1$ coordinates and two zeros in the final positions. This implies there exist $k-1$ distinct $\theta_{i_1},\ldots,\theta_{i_{k-1}}\in\mathbb{F}_q^*$ satisfying:
\begin{equation*}
\begin{cases}
a_1\theta_i^{k-1}+\cdots+a_{k-1}\theta_i + a_k = 0, & \forall i\in\{i_1,\ldots,i_{k-1}\}, \\
a_1 = a_2 = 0.
\end{cases}
\end{equation*}
The polynomial $f(x)=a_3x^{k-3}+\cdots+a_k$ must possess $k-1$ roots, forcing $a_3=\cdots=a_k=0$. Thus $\mathbf{c}=\mathbf{0}$, which is a contradiction.
  \item \noindent {\bf Case 2:}~Suppose $\mathbf{c}$ has $k$ zeros among the first $q-1$ coordinates and one zero in the final positions. This gives either:
\begin{equation*}
\begin{cases}
a_1\theta_i^{k-1}+\cdots+a_k = 0, & \forall i\in\{i_1,\ldots,i_k\}, \\
a_1 = 0,
\end{cases}
\text{or}
\begin{cases}
a_1\theta_i^{k-1}+\cdots+a_k = 0, &\forall i\in\{i_1,\ldots,i_k\}, \\
a_2 = 0.
\end{cases}
\end{equation*}
In both subcases, the resulting polynomial (degree $\leq k-2$ or $\leq k-1$) must possess $k$ roots, forcing all coefficients zero. Hence, $\mathbf{c}=\mathbf{0}$, which is a contradiction.
  \item \noindent {\bf Case 3:}~Suppose $\mathbf{c}$ has $k+1$ zeros among the first $q-1$ coordinates. The system:
\begin{equation*}
a_1\theta_i^{k-1}+\cdots+a_k = 0, \quad \forall i\in\{i_1,\ldots,i_{k+1}\},
\end{equation*}
implies the polynomial of degree $\leq k-1$ must possess $k+1$ roots. Thus, $a_1=\cdots=a_k=0$, and so, $\mathbf{c}=\mathbf{0}$. This is a contradiction.
\end{itemize}

All cases lead to contradictions, therefore $d(\mathcal{C}_1)\geq q-k+1$. By the Singleton bound, $d(\mathcal{C}_1) \leq q - k + 2$. If the equality holds, then $\mathcal{C}_1$ would be MDS, implying its dual $\mathcal{C}_1^\perp$ must also be MDS. However, $\mathcal{C}_1^\perp$ is known to be a $[q+1, q+1-k, k]_q$ AMDS code, which contradicts the MDS property of $\mathcal{C}_1^\perp$. Therefore, $d(\mathcal{C}_1) = q - k + 1$, and $\mathcal{C}_1$ is an NMDS code with parameters $[q+1, k, q - k + 1]_q$, where $4\leq k\leq q-3$ if $p=2$ and $3\leq k\leq q-2$ if $p\neq2$. By Lemmas~\ref{le2.2} and~\ref{le6}, the number of minimum weight codewords in ${\C}_1$ is
$$
A_{q+1-k}=A_k^\perp= \frac{q-1}{q}\binom{q-1}{k-1}+(-1)^{k-1+\lfloor\frac{k-1}{p}\rfloor}\frac{(q-1)^2}{q}\binom{\frac{q}{p}-1}{\lfloor\frac{k-1}{p}\rfloor},
$$
and the weight distribution of $\mathcal{C}_1$ follows from Lemma~\ref{le2.1}.
\end{proof}

Below, we present an example that support the result given in Theorem \ref{th3.1}, which has been verified using Magma program.
\begin{example}
Let ${\C_1}$ be a linear code over $\mathbb{F}_{3^m}$ with generator matrix
$$G_1=
\begin{pmatrix}
    \theta_1^4 &  \cdots & \theta_{q-1}^4 & 1 & 0 \\
    \theta_1^3 &  \cdots & \theta_{q-1}^3 & 0 & 1 \\
    \theta_1^2 &  \cdots & \theta_{q-1}^2 & 0 & 0 \\
    \theta_1   &  \cdots & \theta_{q-1}   & 0 & 0 \\
             1 &          \cdots & 1              & 0 & 0  \\
\end{pmatrix}.
$$
By Theorem \ref{th3.1}, $\mathcal{C}_{1}$ is a $[q+1,5,q-4]_q$ NMDS code. By Lemmas \ref{le2.5} and \ref{le6},
$$A_{q-4}=A_5^\perp=(q-1)N(4,0,{\bF_q^*})=\frac{(q-1)^2(q-3)(q-6)}{24}.$$ By Lemma \ref{le2.1}, the weight enumerator of $\mathcal{C}_1$ is
\begin{equation*}
\begin{aligned}
&1+\frac{(q-1)^2(q-3)(q-6)}{24}x^{q-4}+\frac{(q-1)^2(q^3-6q^2+43q-90)}{24}x^{q-3}\\
&~~+\frac{(q-1)^2(11q^2-39q+90)}{12}x^{q-2}+\frac{(q-1)(3q^4-5q^3+65q^2-117q+90)}{12}x^{q-1}\\
&~~+\frac{(q-1)(8q^4+25q^3-58q^2+139q-66)}{24}x^{q}+\frac{(q-1)^2(9q^3+2q^2+15q-18)}{24}x^{q+1}.
\end{aligned}
\end{equation*}

Let $m=2$ and $\omega$ be a primitive element of $\mathbb{F}_{3^2}$. Then
$$G_{1}=
\begin{pmatrix}
    1 & 2 & 1 & 2 & 1 & 2 & 1 & 2 & 1 & 0 \\
    1 & \omega^3 & \omega^6 & \omega & 2 & \omega^7 & \omega^2 & \omega^5 & 0 & 1 \\
    1 & \omega^2 & 2 & \omega^6 & 1 & \omega^2 & 2 & \omega^6 & 0 & 0 \\
    1 & \omega & \omega^2 & \omega^3 & \omega^4 & \omega^5 & \omega^6 & \omega^7 & 0 & 0 \\
    1 & 1 & 1 & 1 & 1 & 1 & 1 & 1 & 0 & 0 \\
 \end{pmatrix}.
$$
By Magma, $\mathcal{C}_{1}$ generated by $G_{1}$ is a $[10,5,5]_{3^2}$ NMDS code with weight enumerator
$$A(x)=1+48x^5+1440x^6+3360x^7+13560x^8+22400x^9+18240x^{10},$$
which is consistent with the result of Theorem \ref{th3.1}.
\end{example}
According to the results of Lemma \ref{le6} and Theorem \ref{th3.1}, we have the following result.
\begin{corollary}\label{cor:mds-condition}
Let $\mathcal{C}_1$ be a linear code over ${\bF_q}$ with generator matrix $G_1$ defined in (\ref{G1}). Then $\mathcal{C}_1$ is an MDS code with parameters $[q+1,k,q+2-k]_q$ if either
\begin{itemize}
\item[1)] $p=2$ with $m \geq 3$ and $k \in \{3, q-2, q-1\}$; or
\item[2)] $p \neq 2$ with $m \geq 2$ and $k \in \{2, q-1\}$.
\end{itemize}
\end{corollary}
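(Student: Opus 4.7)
The strategy is to show that under either listed condition the dual code $\mathcal{C}_1^\perp$ actually achieves the Singleton bound, i.e.\ $d(\mathcal{C}_1^\perp) = k+1$, upgrading it from AMDS to MDS. Since the dual of an MDS code is MDS, this immediately yields $\mathcal{C}_1 = (\mathcal{C}_1^\perp)^\perp$ with the claimed $[q+1, k, q+2-k]_q$ MDS parameters. The key observation is that, at the boundary values of $k$ in the corollary, the formula for $A_k^\perp$ from Lemma~\ref{le6} vanishes identically.

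First, I would invoke Lemma~\ref{le2.5} to verify $N(k-1, 0, \mathbb{F}_q^*) = 0$ in each listed case. For $k = 2$ with odd $p$, no singleton in $\mathbb{F}_q^*$ equals $0$; for $k = q-1$ (any $p$), the complementary singleton of a $(q-2)$-subset summing to zero would have to equal $0$, which is excluded; for $p = 2$ with $k \in \{3, q-2\}$, a $2$-subset (respectively its complementary $2$-subset) summing to zero would force two equal elements in characteristic $2$. Next, I would re-apply the case analysis in the proof of Lemma~\ref{le6}: any weight-$k$ codeword of $\mathcal{C}_1^\perp$ must have its $k-1$ leading nonzero coordinates in the first $q-1$ positions and its last nonzero coordinate at position $q+1$, and these correspond bijectively (up to scalar) to $(k-1)$-subsets of $\mathbb{F}_q^*$ summing to zero. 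Therefore $A_k^\perp = (q-1) N(k-1, 0, \mathbb{F}_q^*) = 0$.

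Combining with $d(\mathcal{C}_1^\perp) \geq k$ from Lemma~\ref{le5}, the absence of weight-$k$ codewords gives $d(\mathcal{C}_1^\perp) \geq k+1$, while the Singleton bound for the $[q+1, q+1-k]_q$ dual yields $d(\mathcal{C}_1^\perp) \leq k+1$; hence $d(\mathcal{C}_1^\perp) = k+1$ and $\mathcal{C}_1^\perp$ is MDS. Dualizing completes the proof.

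The only subtle point, and the one I would check most carefully, is that Lemmas~\ref{le5} and~\ref{le6} were originally stated inside the NMDS regime $4 \leq k \leq q-3$ (resp.\ $3 \leq k \leq q-2$), whereas the corollary treats endpoint values just outside this regime. The column-independence argument of Lemma~\ref{le5} and the configuration enumeration in Lemma~\ref{le6} are structural and not tied to these bounds; the degenerate subcases (empty blocks of columns, or $k-1$ replaced by $1$ or $q-2$) are handled by the same Vandermonde-type determinant evaluations, with trivial products when indices run over empty sets. So no new ideas are required---only a routine verification that the enumeration remains exhaustive at the endpoints.
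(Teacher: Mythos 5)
Your proposal is correct and follows essentially the same route as the paper: both arguments reduce to showing that under the listed conditions any $k$ columns of $G_1$ are linearly independent (equivalently $A_k^\perp=0$, via the vanishing of $N(k-1,0,\mathbb{F}_q^*)$ already noted in the proof of Lemma~\ref{le6}), so that $d(\mathcal{C}_1^\perp)=k+1$ by the Singleton bound and $\mathcal{C}_1$ is MDS by duality. Your explicit case-by-case verification that $N(k-1,0,\mathbb{F}_q^*)=0$ and your remark about extending Lemmas~\ref{le5} and~\ref{le6} to the endpoint values of $k$ are exactly the details the paper compresses into ``methodologies parallel to Lemmas~\ref{le5} and~\ref{le6}.''
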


\begin{proof}
By Theorem \ref{th3.1}, we have $\dim(\mathcal{C}_1) = k$ and $\dim(\mathcal{C}_1^\perp) = q+1-k$. Through methodologies parallel to Lemmas \ref{le5} and \ref{le6}, we establish that any $k$ columns of $G_1$ are $\mathbb{F}_q$-linearly independent precisely under the stated conditions. This implies $d(\mathcal{C}_1^\perp) \geq k+1$.
The Singleton bound gives $d(\mathcal{C}_1^\perp) \leq k+1$. Therefore $\mathcal{C}_1^\perp$ has parameters $[q+1,q+1-k,k+1]_q$, making it MDS. Since duals of MDS codes remain MDS, the conclusion follows.
\end{proof}

\subsection{The second infinite family of NMDS codes with parameters $[q+2,k,q+2-k]_q$}
Let ${\C_2}$ denote the linear code over ${\bF_q}$ with generator matrix
\begin{equation}\label{G2}
\begin{aligned}
G_{2}=
\begin{pmatrix}
    \theta_1^{k-1}  & \cdots & \theta_{q-1}^{k-1}&0 & 1 & 0 \\
    \theta_1^{k-2}  & \cdots & \theta_{q-1}^{k-2}&0 & 0 & 1 \\
    \theta_1^{k-3}  & \cdots & \theta_{q-1}^{k-3}&0 & 0 & 0 \\
    \vdots&\ddots&\vdots&\vdots&\vdots&\vdots\\
    \theta_1      & \cdots & \theta_{q-1}              &0 & 0 & 0 \\
             1   & \cdots & 1                         &1 & 0 & 0 \\
 \end{pmatrix},
\end{aligned}
\end{equation}
where $4\leq k\leq q-2$ if $p=2$ and $3\leq k\leq q$ if $p\neq2$. By employing the proof method similar to that used in Theorem \ref{th3.1}, we can derive the second infinite family of NMDS codes with parameters $[q+2,k,q+2-k]_q$.
\begin{theorem}\label{th3.4}
Let $\mathcal{C}_2$ be a linear code over ${\bF_q}$ with generator matrix $G_2$ defined in (\ref{G2}). Then $\mathcal{C}_2$ is an NMDS code with parameters $[q+2,k,q+2-k]_q$, has exactly
\begin{equation*}
\begin{aligned}
A_{q+2-k}=\left\{
                       \begin{array}{ll}
                         \frac{q-1}{q}\binom{q}{k-1}, & \hbox{if $k-1$ is not multiple of $p$;} \\
                         \frac{q-1}{q}\binom{q}{k-1}+(-1)^{\frac{(k-1)(p+1)}{p}}\frac{(q-1)^2}{q}\binom{\frac{q}{p}}{\frac{k-1}{p}}, & \hbox{if $k-1$ is multiple of $p$}
                       \end{array}
                     \right.
\end{aligned}
\end{equation*}
minimum weight codewords, and its weight distribution is determined by Lemma \ref{le2.1}.
\end{theorem}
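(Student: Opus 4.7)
The plan is to mirror the proof of Theorem \ref{th3.1} step by step, adapting each argument to the enlarged generator matrix $G_2$. The key structural observation is that the $q$-th column $(0,\ldots,0,1)^\top$ of $G_2$ is precisely the Vandermonde-type column associated with $\theta=0$, so the first $q$ columns of $G_2$ correspond to \emph{all} elements of $\mathbb{F}_q$ (including zero), whereas in $G_1$ only the $q-1$ nonzero elements appeared. Consequently the relevant subset-sum count will be $N(k-1,0,\mathbb{F}_q)$ rather than $N(k-1,0,\mathbb{F}_q^*)$, which is what shifts the final formula from the $\binom{q-1}{k-1}$-style expression of Lemma \ref{le6} to the $\binom{q}{k-1}$-style expression in this theorem.

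First I would show that $\dim(\mathcal{C}_2)=k$, hence $\dim(\mathcal{C}_2^\perp)=q+2-k$, by exhibiting a nonvanishing $k\times k$ Vandermonde subdeterminant from any $k$ distinct $\theta_i$-columns. Next, the bound $d(\mathcal{C}_2^\perp)\ge k$ follows from verifying that any $k-1$ columns of $G_2$ are $\mathbb{F}_q$-linearly independent; the argument is structurally identical to Lemma \ref{le5}, where one considers each way to mix the Vandermonde-like columns (now over all of $\mathbb{F}_q$) with one or both of the ``basis'' columns at positions $q+1,q+2$, and deletes an appropriate row to reduce to an ordinary Vandermonde determinant over distinct elements of $\mathbb{F}_q$.

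The core of the proof is to exhibit and count the minimum weight codewords of $\mathcal{C}_2^\perp$. Mimicking Lemma \ref{le6}, one checks that weight-$k$ codewords of $\mathcal{C}_2^\perp$ cannot be supported entirely within the first $q$ positions, nor within the first $q$ positions together with the $(q+1)$-th position, since in both cases the resulting submatrix already has full rank $k$. The productive case is $k-1$ nonzero coordinates among the first $q$ positions together with one at position $q+2$; the associated $k\times k$ determinant, after expansion along the last column and application of Lemma \ref{le2.6} with the $x^{k-2}$ row omitted, collapses to a nonzero Vandermonde factor times $\sum_{j=1}^{k-1}\alpha_j$, where $\alpha_1,\ldots,\alpha_{k-1}\in\mathbb{F}_q$ are the chosen distinct elements. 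Hence the vanishing condition is $\alpha_1+\cdots+\alpha_{k-1}=0$, the number of such $(k-1)$-subsets is $N(k-1,0,\mathbb{F}_q)$, and after multiplying by the $q-1$ nonzero scalings we obtain
\[
A_k^\perp=(q-1)\,N(k-1,0,\mathbb{F}_q).
\]
Plugging into the $\mathbb{F}_q$-case of Lemma \ref{le2.5} with $v(0)=q-1$, splitting according to whether $p\mid k-1$, and noting $(-1)^{(k-1)+(k-1)/p}=(-1)^{(k-1)(p+1)/p}$ when $p\mid k-1$, yields the two-part formula asserted in the theorem. The stated parameter ranges are exactly those ensuring the resulting count is positive, so that $d(\mathcal{C}_2^\perp)$ equals $k$ and not strictly larger.

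Finally, to conclude $d(\mathcal{C}_2)=q+2-k$ and that $\mathcal{C}_2$ is NMDS, I would run the polynomial-degree case analysis parallel to Theorem \ref{th3.1}. Writing a hypothetical codeword with $\wt(\mathbf{c})\le q+1-k$ as $\mathbf{c}=a_1\mathbf{g}_1+\cdots+a_k\mathbf{g}_k$, the first $q-1$ coordinates are values of $f(x)=a_1x^{k-1}+\cdots+a_k$ at the $\theta_j$'s, while the three trailing coordinates equal $a_k,a_1,a_2$ respectively. A case split on which subset of $\{a_1,a_2,a_k\}$ vanishes (there are now four subcases because of the new $q$-th coordinate) forces $f$, or an appropriate shortening of $f$ obtained by factoring out $x$ when $a_k=0$, to have strictly more nonzero roots than its degree, whence $\mathbf{c}=\mathbf{0}$. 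Combined with the Singleton bound and the fact that $\mathcal{C}_2^\perp$ is AMDS (so $\mathcal{C}_2$ cannot be MDS), this pins down $d(\mathcal{C}_2)=q+2-k$; Lemmas \ref{le2.1} and \ref{le2.2} then deliver the claimed value of $A_{q+2-k}=A_k^\perp$ and the full weight distribution. I expect the main technical subtlety to be the $a_k=0$ subcases of the case analysis, where factoring $x$ out of the polynomial is required to correctly account for the new coupling introduced by the $q$-th column, together with the careful bookkeeping of Lemma \ref{le2.6} in the determinant reduction.
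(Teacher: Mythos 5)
Your proposal is correct and follows exactly the route the paper intends (the paper itself gives no separate proof of Theorem \ref{th3.4}, saying only that it follows by the method of Theorem \ref{th3.1}); in particular, your identification of the $q$-th column as the $\theta=0$ Vandermonde column, yielding $A_k^\perp=(q-1)N(k-1,0,\mathbb{F}_q)$, is precisely what the paper's worked example ($A_5^\perp=(q-1)N(4,0,\mathbb{F}_q)$) confirms. The only cosmetic omission is the support pattern with $k-2$ nonzero coordinates among the first $q$ positions and both trailing positions, which is excluded by the same nonvanishing-Vandermonde computation as the patterns you do list.
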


Below, we present an example that support the result given in Theorem \ref{th3.4}, which has been verified using Magma program.
\begin{example}
Let ${\C_2}$ be a linear code over ${\bF_{3^m}}$ with generator matrix
$$
G_{2}=
\begin{pmatrix}
    \theta_1^4 & \cdots & \theta_{q-1}^4 &0& 1 & 0 \\
    \theta_1^3  & \cdots & \theta_{q-1}^3 &0& 0 & 1 \\
    \theta_1^2  & \cdots & \theta_{q-1}^2 &0& 0 & 0 \\
    \theta_1      & \cdots & \theta_{q-1}   &0& 0 & 0 \\
             1   & \cdots & 1              &1& 0 & 0  \\
\end{pmatrix}.
$$
By Theorem \ref{th3.4},  $\mathcal{C}_{2}$ is a $[q+2,5,q-3]_q$ NMDS code. By Lemmas \ref{le2.5} and \ref{le6},
$$A_{q-3}=A_5^\perp=(q-1)N(4,0,{\bF_q})=\frac{(q-1)^2(q-2)(q-3)}{24}.$$
By Lemma \ref{le2.1}, the weight enumerator of $\mathcal{C}_2$ is
\begin{equation*}
\begin{aligned}
&1+\frac{(q-1)^2(q-2)(q-3)}{24}x^{q-3}+\frac{(q-1)^2(q^3-2q^2+27q-30)}{24}x^{q-2}\\
&+\frac{(q-1)(3q^3-6q^2+21q-10)}{4}x^{q-1}+\frac{(q-1)(3q^4+q^3+33q^2-43q+42)}{12}x^{q}\\
&+\frac{(q-1)^2(8q^3+29q^2-9q+30)}{24}x^{q+1}+\frac{(q-1)^3(3q^2+q+2)}{8}x^{q+2}.
\end{aligned}
\end{equation*}

Let $m=2$ and $\omega$ be a primitive element of $\mathbb{F}_{3^2}$. Then
$$G_2=
\left(
  \begin{array}{ccccccccccc}
    1 & 2 & 1 & 2 & 1 & 2 & 1 & 2 & 0& 1 & 0 \\
    1 & \omega^3 & \omega^6 & \omega & 2 & \omega^7 & \omega^2 & \omega^5 & 0& 0 & 1 \\
    1 & \omega^2 & 2 & \omega^6 & 1 & \omega^2 & 2 & \omega^6 & 0& 0 & 0 \\
    1 & \omega & \omega^2 & \omega^3 & \omega^4 & \omega^5 & \omega^6 & \omega^7 & 0& 0 & 0 \\
    1 & 1 & 1 & 1 & 1 & 1 & 1 & 1 & 1& 0 & 0 \\
  \end{array}
\right).
$$
By Magma, $\mathcal{C}_2$ generated by $G_2$ is an $[11,5,6]_{3^2}$ NMDS code with weight enumerator
$$A(x)=1+112x^6+2080x^7+3760x^8+15160x^9+21680x^{10}+16256x^{11},$$
which is consistent with the result of Theorem \ref{th3.4}.
\end{example}
Similar to Lemma \ref{cor:mds-condition}, we directly obtain the following result by Theorem \ref{th3.4}.
\begin{corollary}\label{coro12}
Let $\mathcal{C}_2$ be a linear code over ${\bF_q}$ with generator matrix $G_2$ defined in (\ref{G2}). Then $\mathcal{C}_2$ is an MDS code with parameters $[q+2,k,q+3-k]_q$ if $p=2$ with $m \geq 3$ and $k \in \{3, q-1\}$.
\end{corollary}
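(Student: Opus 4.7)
The plan is to mirror the strategy used in the proof of Corollary~\ref{cor:mds-condition}: upgrade the NMDS-to-MDS transition for $\mathcal{C}_2$ by showing that any $k$ columns of $G_2$ are $\mathbb{F}_q$-linearly independent precisely when $p=2$, $m\geq 3$ and $k\in\{3,q-1\}$. Theorem~\ref{th3.4} already supplies $\dim(\mathcal{C}_2)=k$ and $\dim(\mathcal{C}_2^\perp)=q+2-k$, so it suffices to boost $d(\mathcal{C}_2^\perp)$ from $k$ (which is guaranteed by the NMDS property) up to $k+1$.

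Following the templates of Lemmas~\ref{le5} and~\ref{le6} (now applied to $G_2$), I would enumerate the possible $k$-column subfamilies of $G_2$: those containing both trailing identity columns, those containing exactly one, and those containing none. All types except the last reduce to standard Vandermonde-type determinants and contribute no dependencies, exactly as in the argument of Lemma~\ref{le5}. The sole remaining case comes from selecting $k$ columns among the first $q$ positions (the evaluation block together with the column $(0,\ldots,0,1)^\top$, naturally viewed as evaluation at $\theta=0$), possibly along with one trailing identity column. By the generalized Vandermonde identity of Lemma~\ref{le2.6}, the vanishing of the relevant $k\times k$ minor is equivalent to $\sum_{j=1}^{k-1}\theta_{i_j}=0$ over a $(k-1)$-subset of $\mathbb{F}_q$, so the total count of such dependent configurations equals $(q-1)\,N(k-1,0,\mathbb{F}_q)=A_{q+2-k}$, recovering exactly the formula stated in Theorem~\ref{th3.4}.

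The central computational step is then to verify, via Lemma~\ref{le2.5}, that $N(k-1,0,\mathbb{F}_q)=0$ for $q=2^m$ with $m\geq 3$ and $k\in\{3,q-1\}$. For $k=3$, $k-1=2$ is divisible by $p=2$, and a direct substitution yields
\[
N(2,0,\mathbb{F}_q)=\frac{1}{q}\binom{q}{2}-\frac{q-1}{q}\binom{q/2}{1}=\frac{q-1}{2}-\frac{q-1}{2}=0.
\]
For $k=q-1$, we have $k-1=q-2$, again divisible by $p=2$; since $q/2$ is even for $m\geq 2$, the sign exponent $(k-1)+(k-1)/p$ is odd, while $\binom{q/2}{(q-2)/2}=q/2$, so the two terms in Lemma~\ref{le2.5} cancel and the formula collapses to zero once more.

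With $A_{q+2-k}=0$ established, no $k$-subset of columns of $G_2$ is linearly dependent, and hence $d(\mathcal{C}_2^\perp)\geq k+1$ by Corollary~1.4.14 of \cite{Huffman2003}. The Singleton bound forces equality, making $\mathcal{C}_2^\perp$ an $[q+2,q+2-k,k+1]_q$ MDS code, and duality then yields that $\mathcal{C}_2$ itself is an MDS code with parameters $[q+2,k,q+3-k]_q$, as claimed. I expect the main obstacle to be the case analysis verifying that the generalized Vandermonde argument captures \emph{every} possible $k$-column configuration of $G_2$---in particular, confirming that the two extra identity columns in the last positions do not introduce additional dependencies beyond those enumerated by $N(k-1,0,\mathbb{F}_q)$; once this bookkeeping is in place, the numerical cancellation for $k\in\{3,q-1\}$ is immediate from Lemma~\ref{le2.5}.
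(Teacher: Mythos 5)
Your proposal is correct and follows essentially the same route as the paper, which proves the corollary exactly as you do: by the method of Corollary~\ref{cor:mds-condition}, one checks via Lemma~\ref{le2.6} that the only possible $k$-column dependencies in $G_2$ are counted by $(q-1)N(k-1,0,\mathbb{F}_q)=A_{q+2-k}$, and Lemma~\ref{le2.5} shows this vanishes for $p=2$, $m\geq 3$, $k\in\{3,q-1\}$, whence $d(\mathcal{C}_2^\perp)=k+1$ and both codes are MDS. Your numerical cancellations for $k=3$ and $k=q-1$ are verified correctly.
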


\begin{remark}
We compare the NMDS codes constructed in Theorems \ref{th3.1} and \ref{th3.4}  with the known ones in the literature as follows:
\begin{itemize}
  \item Setting $p=2,m\geq3$ and $k=4$ in Theorems \ref{th3.1} and \ref{th3.4}, we can directly obtain the corresponding results of Theorems 12 and 15 of \cite{DingY2024} with $h=1$, respectively.  This means that our results generalize the related constructions of NMDS codes given in \cite{DingY2024}.
  \item Setting $p=2,m\geq3$ and $k=4$ in Theorems \ref{th3.1} and \ref{th3.4}, our NMDS codes have different generator matrices and weight distributions compared with the NMDS codes with lengths $q+1,q+2$ and dimension $4$ over $\mathbb{F}_{2^m} (m\geq3~{\rm odd})$ provided in \cite{Heng2022}.
  \item Compared with the known NMDS codes with length $q+1$, dimension $4$ presented in \cite{Ding2020}, length $q+1$, dimension $3$ presented in \cite{Li2023} and lengths $q+1,q+2$, dimension $4$ presented in \cite{Wang2021}, our NMDS codes have different weight distributions and more flexible parameters.
\end{itemize}
\end{remark}
\section{New infinite families of NMDS codes with arbitrary dimensions supporting $t$-designs}\label{sec.4}
In this section, we construct two new infinite families of NMDS codes over ${\bF_{2^m}}$ with arbitrary dimensions supporting $2$-designs or $3$-designs.
\subsection{New NMDS codes with parameters $[2^m-1,k,2^m-1+k]_{2^m}$ supporting $2$-designs}
In this subsection, we prove that the Heng-Wang's Conjecture \cite{Heng2023} on NMDS codes holding $2$-designs.
Let ${\C_3}$ denote the third family of linear codes over ${\bF_{2^m}}$ with generator matrix
\begin{equation}\label{G3}
\begin{aligned}
G_3=
  \begin{pmatrix}
    1 &         1  & \cdots & 1               \\
    \theta_1   & \theta_2   & \cdots & \theta_{2^m-1}    \\
    \vdots&\vdots&\ddots&\vdots\\
    \theta_1^{k-2} & \theta_2^{k-2} & \cdots & \theta_{2^m-1}^{k-2}  \\
    \theta_1^{k} & \theta_2^{k} & \cdots & \theta_{2^m-1}^{k}  \\
  \end{pmatrix},
\end{aligned}
\end{equation}
where $3\leq k\leq 2^m-4$. In the following, we prove that ${\C_3}$ is an NMDS code with parameters $[2^m-1,k,2^m-1-k]_{2^m}$ supporting $2$-designs and determine the weight distribution of this code. We first consider the parameters of the dual of ${\C_3}$.
\begin{lemma}\label{le12}
Let $\mathcal{C}_3$ be a linear code over ${\bF_{2^m}}$ with generator matrix $G_3$ defined in (\ref{G3}). Then ${\C_3}^\perp$ is a $[2^m-1,2^m-k-1,k]_{2^m}$ AMDS code and the number of its minimum weight codewords is
\begin{equation*}
\begin{aligned}
A_k^\perp=\frac{(2^m-1)^2}{2^m}\left[\frac{1}{k}\binom{2^m-2}{k-1}+(-1)^{k+\lfloor\frac{k}{2}\rfloor}\binom{2^{m-1}-1}{\lfloor\frac{k}{2}\rfloor}\right].
\end{aligned}
\end{equation*}
\end{lemma}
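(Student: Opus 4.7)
The plan is to establish in sequence (i) the dimension of $\mathcal{C}_3^\perp$, (ii) the lower bound $d(\mathcal{C}_3^\perp) \geq k$, and (iii) the exact count of minimum weight codewords. For (i), I would observe that the rows of $G_3$ are $\mathbb{F}_{2^m}$-linearly independent: any dependency $\sum_e c_e \theta_i^e = 0$ holding for every $\theta_i \in \mathbb{F}_{2^m}^*$ forces the polynomial $c_0 + c_1 x + \cdots + c_{k-2}x^{k-2} + c_{k-1}x^{k}$, of degree at most $k$, to have at least $2^m - 1 > k$ distinct roots and thus vanish identically. Hence $\dim(\mathcal{C}_3) = k$ and $\dim(\mathcal{C}_3^\perp) = 2^m - 1 - k$. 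For (ii), for any $k - 1$ column indices $i_1, \ldots, i_{k-1}$, the first $k - 1$ rows of $G_3$ restrict to a classical Vandermonde matrix in distinct $\theta_{i_j}$'s, hence nonsingular; by Corollary~1.4.14 of \cite{Huffman2003} any $k - 1$ columns of $G_3$ are $\mathbb{F}_{2^m}$-linearly independent, so $d(\mathcal{C}_3^\perp) \geq k$.

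The crux of the argument is the enumeration of singular $k \times k$ subminors of $G_3$. For a $k$-subset $\{\theta_{i_1}, \ldots, \theta_{i_k}\} \subseteq \mathbb{F}_{2^m}^*$, the corresponding $k \times k$ submatrix $M$ has rows indexed by exponents $0, 1, \ldots, k-2, k$, with the exponent $k-1$ omitted. Applying Lemma~\ref{le2.6} with $i = k-1$ yields
$$\det(M) = \left(\prod_{1 \leq a < b \leq k} (\theta_{i_b} - \theta_{i_a})\right) \cdot \sum_{j=1}^{k} \theta_{i_j},$$
so $M$ is singular exactly when $\sum_{j=1}^{k}\theta_{i_j} = 0$. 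In that case, step (ii) forces $\rank(M) = k - 1$, so the null space of $M$ is one-dimensional; moreover, every nonzero null vector has all $k$ coordinates nonzero, for otherwise some $k - 1$ columns of $G_3$ would be dependent, contradicting (ii). Therefore each zero-sum $k$-subset contributes exactly $2^m - 1$ weight-$k$ codewords to $\mathcal{C}_3^\perp$.

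Summing over all zero-sum $k$-subsets of $\mathbb{F}_{2^m}^*$ yields $A_k^\perp = (2^m - 1) \cdot N(k, 0, \mathbb{F}_{2^m}^*)$, and inserting Lemma~\ref{le2.5} with $p = 2$, $q = 2^m$, $v(0) = 2^m - 1$, together with the binomial identity $\tfrac{1}{k}\binom{2^m - 2}{k-1} = \tfrac{1}{2^m - 1}\binom{2^m - 1}{k}$, converts this into the stated closed form. The range $3 \leq k \leq 2^m - 4$ avoids the exceptional values $k \in \{2,\, 2^m - 3,\, 2^m - 2\}$ at which $N(k, 0, \mathbb{F}_{2^m}^*) = 0$ for $p = 2$, so $A_k^\perp > 0$ and $d(\mathcal{C}_3^\perp) = k$; combined with the Singleton bound $d(\mathcal{C}_3^\perp) \leq k + 1$, this makes $\mathcal{C}_3^\perp$ an AMDS code. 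The main delicate step is verifying that every singular $k \times k$ subminor produces a one-dimensional null space whose nonzero vectors have full support on the chosen $k$-subset, which is what turns the algebraic zero-sum condition into the precise multiplicative count of codewords.
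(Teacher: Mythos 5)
Your proposal is correct and follows essentially the same route as the paper's proof: establish $\dim(\mathcal{C}_3)=k$ by root-counting, get $d(\mathcal{C}_3^\perp)\geq k$ from the independence of any $k-1$ columns, evaluate the $k\times k$ minors via the omitted-row Vandermonde formula of Lemma~\ref{le2.6}, and count the singular ones by $N(k,0,\mathbb{F}_{2^m}^*)$, each contributing $2^m-1$ codewords. The only difference is that you spell out explicitly the step the paper delegates to ``the proof technique of Lemma~\ref{le6}'' --- that a singular minor has a one-dimensional null space whose nonzero vectors have full support --- which is a welcome clarification rather than a divergence.
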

\begin{proof}
We first prove that $\dim({\C_3})=k$. Let $\mathbf{g}_i$ be the $i$-th row of $G_3$, where $1\leq i\leq k$. Assume that there exist $a_1,a_2,\ldots,a_k\in\mathbb{F}_{2^m}$ that are not all $0$ such that $\sum_{i=1}^ka_i\mathbf{g}_i=\mathbf{0}$. It implies that
$f(\theta_\ell)=a_1+a_2\theta_\ell+\cdots+a_{k-1}\theta_\ell^{k-2}+a_k\theta_\ell^k=0$ for $\ell=1,2,\ldots,2^m-1$.
It is easy to find that the polynomial $f(x)=a_1+a_2x+\cdots+a_{k-1}x^{k-2}+a_kx^k$ has at most $k$ zeros in ${\bF_{2^m}^*}$, which contradicts the fact that $\theta_1,\theta_2,\ldots,\theta_{2^m-1}$ are all elements of ${\bF_{2^m}^*}$. Then $\mathbf{d}_1,\mathbf{d}_2,\ldots,\mathbf{d}_k$ are ${\bF_{2^m}}$-linearly independent. Hence, $\dim({\C_3})=k$.
\par
We then prove that ${\C_3}^\perp$ has parameters $[2^m-1,2^m-k-1,k]_{2^m}$. Clearly, $\dim({\C}_3^\perp)=2^m-1-\dim({\C_3})=2^m-k-1$. Below, we need to prove that $d({\C_3}^\perp)=k$.  Following arguments similar to those in Lemma \ref{le5}, we can show that any $k-1$ columns in $G_3$ are ${\bF_{2^m}}$-linearly independent, which implies $d(\mathcal{C}_3^\perp) \geq k$. Now we consider a $k$ by $k$ submatrix $G_{3,1}$ formed by any $k$ columns of $G_3$ as follows:
\begin{equation}\label{G31}
\begin{aligned}
G_{3,1}=
\begin{pmatrix}
    1 & 1 & \cdots &  1  \\
    \theta_{i_1} & \theta_{i_2} & \cdots & \theta_{i_{k}}  \\
    \vdots&\vdots&\ddots&\vdots\\
    \theta_{i_1}^{k-2} & \theta_{i_2}^{k-2} & \cdots & \theta_{i_{k}}^{k-2}  \\
    \theta_{i_1}^{k} & \theta_{i_2}^{k} & \cdots & \theta_{i_{k}}^{k}  \\
   \end{pmatrix}.
\end{aligned}
\end{equation}
By Lemma \ref{le2.6}, we have
$$\det(G_{3,1}) = \left( \prod_{1 \leq a < b \leq k} (\theta_{i_b} - \theta_{i_a}) \right) \cdot \sum_{j=1}^k \theta_{i_j}.$$
By Lemma \ref{le2.5}, the number of distinct $k$-subsets $\{\theta_{i_1},\theta_{i_2},\ldots,\theta_{i_k}\}\subseteq\mathbb{F}_q^*$ such that $\det(G_{3,1})=0$ is equal to $N(k,0,{\bF_{2^m}^*})$, where $3\leq k\leq 2^m-4$ satisfying $N(k,0,{\bF_{2^m}^*})\neq0$. This means that there exist $k$ columns of $G_3$ are ${\bF_q}$-linearly dependent. By following the proof technique of Lemma \ref{le6}, we conclude that $\mathcal{C}_3$ is a $[2^m-1,2^m-k-1,k]_{2^m}$ AMDS code and the number of its minimum weight codewords is
\begin{equation*}
A_k^\perp=(q-1)N(k,0,{\bF_{2^m}^*})
=\frac{(2^m-1)^2}{2^m}\left[\frac{1}{k}\binom{2^m-2}{k-1}+(-1)^{k+\lfloor\frac{k}{2}\rfloor}\binom{2^{m-1}-1}{\lfloor\frac{k}{2}\rfloor}\right],
\end{equation*}
where $3\leq k\leq 2^m-4$.
\end{proof}
Next, we prove that the minimum weight codewords of ${\C_3}^\perp$ support a simple $2$-design.
\begin{lemma}\label{le13}
Let $\mathcal{C}_3$ be a linear code over ${\bF_{2^m}}$ with generator matrix $G_3$ defined in (\ref{G3}). Then the minimum weight codewords of ${\C_3}^\perp$ support a $2$-$(2^m-1,k,\lambda_1^c)$ simple design, where
$$\lambda_1^c=\frac{1}{2^m}\left[\binom{2^m-3}{k-2}+(-1)^{k+\lfloor\frac{k}{2}\rfloor}
\frac{k(k-1)}{2\lfloor\frac{k}{2}\rfloor}\binom{2^{m-1}-2}{\lfloor\frac{k}{2}\rfloor-1}\right].$$
\end{lemma}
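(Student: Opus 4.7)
The plan is to first identify the block set, then verify the 2-design property by showing that the number of blocks containing each pair of coordinates is the same, and finally to read off $\lambda_1^c$ from the standard design identity~(\ref{eq(1)}).

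First I would identify the blocks. The analysis in the proof of Lemma~\ref{le12} shows that a codeword of $\mathcal{C}_3^\perp$ has weight $k$ if and only if its support is some $k$-subset $S\subseteq\mathbb{F}_{2^m}^*$ with $\sum_{\theta\in S}\theta=0$, and the $2^m-1$ scalar multiples of each such codeword share the same support. Hence the scaled multiset $\mathcal{B}_k(\mathcal{C}_3^\perp)$ is simple and equal to the collection of all $N(k,0,\mathbb{F}_{2^m}^*)$ distinct $k$-subsets of $\mathbb{F}_{2^m}^*$ summing to zero.

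Second, for any distinct $\alpha,\beta\in\mathbb{F}_{2^m}^*$ with $c:=\alpha+\beta\in\mathbb{F}_{2^m}^*$, the number of blocks through $\{\alpha,\beta\}$ equals $\lambda_{\alpha,\beta}:=N(k-2,c,D)$ with $D:=\mathbb{F}_{2^m}^*\setminus\{\alpha,\beta\}$. To prove that $\lambda_{\alpha,\beta}$ depends only on $k$ and $m$, I would invoke the additive character identity
$$N(k-2,c,D)=\frac{1}{2^m}\sum_{\chi}\chi(c)\,[x^{k-2}]\prod_{d\in D}\bigl(1+\chi(d)x\bigr),$$
where $\chi$ ranges over the additive characters of $\mathbb{F}_{2^m}$. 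Since every nontrivial additive character takes values in $\{\pm 1\}$ with exactly $2^{m-1}$ preimages each, one has $\prod_{d\in\mathbb{F}_{2^m}}(1+\chi(d)x)=(1-x^2)^{2^{m-1}}$, so the product over $D$ becomes $(1-x^2)^{2^{m-1}}/\bigl[(1+x)(1+\chi(\alpha)x)(1+\chi(\beta)x)\bigr]$. Partitioning the nontrivial characters by the sign pattern $(\chi(\alpha),\chi(\beta))\in\{\pm 1\}^2$, and using the fact that $\alpha,\beta$ are $\mathbb{F}_2$-linearly independent, each of the four patterns is attained by exactly $2^{m-2}$ characters (with $2^{m-2}-1$ nontrivial ones in the $(+,+)$ class). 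Combined with $\chi(c)=\chi(\alpha)\chi(\beta)$, the weighted sum collapses to a fixed linear combination of polynomials of the form $(1-x)^a(1+x)^b$ whose $[x^{k-2}]$ coefficient depends only on $k$ and $m$. This gives $\lambda_{\alpha,\beta}$ constant in the pair, proving the 2-design property.

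Third, with the design confirmed, Eq.~(\ref{eq(1)}) together with Lemma~\ref{le12} yields $\lambda_1^c=\binom{k}{2}N(k,0,\mathbb{F}_{2^m}^*)/\binom{2^m-1}{2}$. Substituting the explicit formula for $N(k,0,\mathbb{F}_{2^m}^*)$ from Lemma~\ref{le2.5} and simplifying by means of the elementary identities $\binom{2^m-1}{k}=\tfrac{(2^m-1)(2^m-2)}{k(k-1)}\binom{2^m-3}{k-2}$, $\binom{2^{m-1}-1}{\lfloor k/2\rfloor}=\tfrac{2^{m-1}-1}{\lfloor k/2\rfloor}\binom{2^{m-1}-2}{\lfloor k/2\rfloor-1}$, and $2^m-2=2(2^{m-1}-1)$, reproduces exactly the stated closed form for $\lambda_1^c$. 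The main obstacle is the pair-independence step: a direct inclusion-exclusion from Lemma~\ref{le2.5} does not close cleanly, since it introduces intermediate counts such as $N(r,s,\mathbb{F}_{2^m}^*\setminus\{\alpha\})$ that lie outside the lemma's scope and only terminate through a secondary telescoping recursion $f(r,s)=N(r,s,\mathbb{F}_{2^m}^*)-f(r-1,s-\alpha)$; the character-sum route circumvents this by exploiting the uniform joint distribution of $(\chi(\alpha),\chi(\beta))$ over $\{\pm 1\}^2$ whenever $\alpha,\beta$ are $\mathbb{F}_2$-linearly independent.
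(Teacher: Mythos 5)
Your proposal is correct, and its overall skeleton (identify the blocks as the zero-sum $k$-subsets of $\mathbb{F}_{2^m}^*$, establish pair-independence, then read $\lambda_1^c$ off from Eq.~(\ref{eq(1)}) and simplify) matches the paper's. The essential difference is in how the $2$-design property itself is justified. The paper handles the key step --- that the number of blocks through a fixed pair $\{\theta_{i_1},\theta_{i_2}\}$ equals $\frac{k(k-1)N(k,0,\mathbb{F}_{2^m}^*)}{(2^m-1)(2^m-2)}$ independently of the pair --- by appealing to Corollary~\ref{coro5.5} of its Appendix, a divisibility statement $\frac{(2^m-1)(2^m-2)}{k(k-1)}\mid N(k,0,\mathbb{F}_{2^m}^*)$ derived from the recurrence relations for subset sums; strictly speaking that corollary only guarantees the \emph{average} pair-count is an integer, so the paper's justification of constancy is quite terse. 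You instead prove pair-independence directly via the additive character expansion of $N(k-2,\alpha+\beta,\mathbb{F}_{2^m}^*\setminus\{\alpha,\beta\})$, using that $(\chi(\alpha),\chi(\beta))$ is equidistributed over $\{\pm1\}^2$ for $\mathbb{F}_2$-linearly independent $\alpha,\beta$ and that $\chi(\alpha+\beta)=\chi(\alpha)\chi(\beta)$; this is a complete and self-contained argument that in fact patches the weakest point of the paper's proof, at the cost of importing generating-function machinery rather than reusing the Appendix. Your final simplification to the stated closed form for $\lambda_1^c$ via $\binom{2^m-1}{k}=\frac{(2^m-1)(2^m-2)}{k(k-1)}\binom{2^m-3}{k-2}$ and $\binom{2^{m-1}-1}{\lfloor k/2\rfloor}=\frac{2^{m-1}-1}{\lfloor k/2\rfloor}\binom{2^{m-1}-2}{\lfloor k/2\rfloor-1}$ checks out and agrees with the paper's computation.
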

\begin{proof}
 Let $\mathbf{c}^\perp=(c_1,c_2,\ldots,c_{q-1})$ be the codeword of ${\C_3}^\perp$ with minimum weight $k$. Then we have $G_3\cdot(c_1,c_2,\ldots,c_{q-1})^\top=\mathbf{0}_{k}$. Precisely, we have the following homogeneous system of equations holds:
$$
G_{3,1}\cdot (c_{i_1},c_{i_2},\ldots,c_{i_k})^\top=\mathbf{0}_k,
$$
where $G_{3,1}$ is given by Eq.(\ref{G31}). Therefore, every minimum weight codewords in ${\C_3}^\perp$ with nonzero coordinates in $\{i_1,i_2,\ldots,i_k\}$ must correspond to the $k$-subset $\{\theta_{i_1},\theta_{i_2},\ldots,\theta_{i_k}\}$. By the coding-theoretic construction of $t$-designs, we define $\mathcal{P}({\C_3}^\perp):=\{1,2,\ldots,q-1\}$ and
\begin{equation*}
\begin{aligned}
\mathcal{B}_k({\C_3}^\perp)&:=\{\!\!\{\operatorname{supp}(\mathbf{c}^\perp) : \mathbf{c}^\bot \in{\C_3}^\perp,\ \operatorname{wt}(\mathbf{c}^\perp) = k\}\!\!\}\\
&=\{\!\!\{i_1^{(1)},i_2^{(1)},\ldots,i_k^{(1)}\},\{i_1^{(2)},i_2^{(2)},\ldots,i_k^{(2)}\},\ldots,\{i_1^{(N)},i_2^{(N)},\ldots,i_k^{(N)}\}\!\!\},
\end{aligned}
\end{equation*}
where $N:=N(k,0,{\bF_{2^m}^*})$ and $3\leq k\leq 2^m-4$. For any two different fixed elements $\theta_{i_1},\theta_{i_2}$, by Corollary \ref{coro5.5} given in the Appendix, the total number of different choices of $\theta_{i_3},\theta_{i_4},\ldots,\theta_{i_k}\in{\bF_{2^m}^*}$ such that $\det(G_{3,1})=0$ is equal to $\frac{k(k-1)N(k,0,{\bF_{2^m}^*})}{(2^m-1)(2^m-2)}$. It means that the number of choices of $\theta_{i_3},\theta_{i_4},\ldots,\theta_{i_k}$ is independent of $\theta_{i_1},\theta_{i_2}$. Then the pair $(\mathcal{P}({\C_3}^\perp),\mathcal{B}_k({\C_3}^\perp))$ is a $2$-$(2^m-1,k,\lambda_1^c)$ simple design with $\#\mathcal{B}_k(\mathcal{C}_3^\perp)=N(k,0,{\bF_{2^m}^*})$  distinct blocks, where $3\leq k\leq 2^m-4$ and
\begin{equation*}
\begin{aligned}
\lambda_1^c&=\frac{k(k-1)N(k,0,{\bF_{2^m}^*})}{(2^m-1)(2^m-2)}=
\frac{1}{2^m}\left[\binom{2^m-3}{k-2}+(-1)^{k+\lfloor\frac{k}{2}\rfloor}
\frac{k(k-1)}{2\lfloor\frac{k}{2}\rfloor}\binom{2^{m-1}-2}{\lfloor\frac{k}{2}\rfloor-1}\right].
\end{aligned}
\end{equation*}
Therefore, we can deduce that the codeword of weight $k$ in ${\C_3}^\perp$ support a $2$-$\left(2^m-1,k,\lambda_1^c\right)$ simple design.
\end{proof}

Finally, we establish that ${\C_3}$ is a $[2^m-1,k,2^m-1-k]_{2^m}$ NMDS code supporting a simple $2$-design.
\begin{theorem}\label{th4.1}
Let $\mathcal{C}_3$ be a linear code over ${\bF_{2^m}}$ with generator matrix $G_3$ defined in (\ref{G3}). Then ${\C_3}$ is an NMDS code with parameters $[2^m-1,k,2^m-1-k]_{2^m}$, has exactly
$$
A_{2^m+1-k} = \frac{(2^m-1)^2}{2^m}\left[\frac{1}{k}\binom{2^m-2}{k-1}+(-1)^{k+\lfloor\frac{k}{2}\rfloor}\binom{2^{m-1}-1}{\lfloor\frac{k}{2}\rfloor}\right]
$$
minimum weight codewords, and its weight distribution is determined by Lemma \ref{le2.1}.
Moreover, the minimum weight codewords of ${\C_3}$ support a $2$-$(2^m-1,2^m-1-k,\lambda_1)$ simple design, where
$$\lambda_1=\frac{(2^m-1-k)(2^m-2-k)}{2^mk(k-1)}\left[\binom{2^m-3}{k-2}+(-1)^{k+\lfloor\frac{k}{2}\rfloor}
\frac{k(k-1)}{2\lfloor\frac{k}{2}\rfloor}\binom{2^{m-1}-2}{\lfloor\frac{k}{2}\rfloor-1}\right].$$
\end{theorem}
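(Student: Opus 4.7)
The plan is to follow the template already established by the proof of Theorem \ref{th3.1}, combined with the duality structure (Lemma \ref{le2.2}) and the complementary design formula (Eq.(\ref{eq(1.1)})). The argument decomposes into three steps: verifying the NMDS parameters, counting the minimum weight codewords of $\mathcal{C}_3$, and then extracting the $2$-design from the complement of the $2$-design produced by Lemma \ref{le13}.

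First, I would leverage Lemma \ref{le12}, which already gives $\dim(\mathcal{C}_3) = k$ and tells us that $\mathcal{C}_3^\perp$ is a $[2^m-1, 2^m-1-k, k]_{2^m}$ AMDS code. So for $\mathcal{C}_3$ to be NMDS with parameters $[2^m-1, k, 2^m-1-k]_{2^m}$, it remains to prove $d(\mathcal{C}_3) = 2^m-1-k$. The Singleton bound gives $d(\mathcal{C}_3) \leq 2^m-k$; if equality held, $\mathcal{C}_3$ would be MDS, forcing $\mathcal{C}_3^\perp$ to be MDS as well, which contradicts Lemma \ref{le12}. Thus it suffices to rule out $d(\mathcal{C}_3) \leq 2^m-2-k$. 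Assume toward contradiction that some nonzero $\mathbf{c} = \sum_{i=1}^{k} a_i \mathbf{g}_i$ has at least $k+1$ zero coordinates, where $\mathbf{g}_i$ is the $i$-th row of $G_3$ in (\ref{G3}). Writing out the coordinate constraints yields $k+1$ distinct elements $\theta_{i_1},\ldots,\theta_{i_{k+1}} \in \mathbb{F}_{2^m}^*$ that are all roots of the polynomial $f(x) = a_1 + a_2 x + \cdots + a_{k-1} x^{k-2} + a_k x^k$, which has degree at most $k$. Hence $f$ has $k+1$ roots, forcing $f \equiv 0$ and thus $\mathbf{c} = \mathbf{0}$, a contradiction. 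Therefore $d(\mathcal{C}_3) = 2^m-1-k$ and $\mathcal{C}_3$ is NMDS.

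Second, the count $A_{2^m-1-k} = A_k^\perp$ is immediate from Lemma \ref{le2.2}, because for any NMDS code the number of minimum weight codewords of $\mathcal{C}$ equals that of $\mathcal{C}^\perp$. The explicit value then follows from the formula for $A_k^\perp$ already derived in Lemma \ref{le12}.

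Third, for the $2$-design on $\mathcal{C}_3$, I would invoke Lemma \ref{le2.2} once more: for each minimum weight codeword $\mathbf{c} \in \mathcal{C}_3$ there exists, up to a scalar, a unique minimum weight codeword $\mathbf{c}^\perp \in \mathcal{C}_3^\perp$ with $\support(\mathbf{c}) \cap \support(\mathbf{c}^\perp) = \emptyset$. Since $|\support(\mathbf{c})| + |\support(\mathbf{c}^\perp)| = (2^m-1-k) + k = 2^m - 1 = |\mathcal{P}|$, these two supports are in fact set-theoretic complements in $\mathcal{P}$. Consequently the multiset $\mathcal{B}_{2^m-1-k}(\mathcal{C}_3)$ is precisely the block-complement $\mathcal{B}_k(\mathcal{C}_3^\perp)^c$ of the design produced in Lemma \ref{le13}. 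Applying the complementary design formula (\ref{eq(1.1)}) with $n = 2^m-1$, $t = 2$, block size $k$ gives a $2$-$(2^m-1, 2^m-1-k, \lambda_1)$ simple design with
\[
\lambda_1 \;=\; \lambda_1^c \cdot \frac{\binom{2^m-3}{k}}{\binom{2^m-3}{k-2}} \;=\; \lambda_1^c \cdot \frac{(2^m-1-k)(2^m-2-k)}{k(k-1)},
\]
and substituting the value of $\lambda_1^c$ from Lemma \ref{le13} produces the closed form stated in the theorem.

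I expect the only genuinely delicate step to be the minimum distance lower bound in the first part: the polynomial counting argument must be set up carefully because of the gap between $x^{k-2}$ and $x^k$ in the row structure of $G_3$, and one has to ensure the relevant polynomial $f(x) = a_1 + a_2 x + \cdots + a_{k-1} x^{k-2} + a_k x^k$ has degree at most $k$ so that $k+1$ roots already force it to vanish. The remaining steps are essentially a mechanical combination of Lemma \ref{le2.2}, Lemma \ref{le12}, Lemma \ref{le13}, and the complementary design identity (\ref{eq(1.1)}).
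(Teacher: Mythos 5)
Your proposal is correct and follows essentially the same route as the paper's proof: polynomial root counting for the lower bound on $d(\mathcal{C}_3)$, the Singleton bound plus the AMDS dual from Lemma \ref{le12} to pin down $d(\mathcal{C}_3)=2^m-1-k$, Lemma \ref{le2.2} for $A_{2^m-1-k}=A_k^\perp$, and the support-complementation argument to transfer the $2$-design of Lemma \ref{le13} to $\mathcal{C}_3$. The only cosmetic difference is that the paper computes $\lambda_1$ directly from Eq.~(\ref{eq(1)}) using $A_{2^m-1-k}$ and then notes the complementary-design relation, whereas you obtain $\lambda_1$ from $\lambda_1^c$ via Eq.~(\ref{eq(1.1)}); the two computations agree.
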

\begin{proof}
The generator matrix of ${\C_3}$ given by Eq.(\ref{G3}) produces codewords through linear combinations of its rows, yielding the code structure
$${\C_3}=\left\{\mathbf{c}=(a_1+a_2x+\cdots+a_{k-1}x^{k-2}+a_kx^k)_{x\in\mathbb{F}_{2^m}^*}:a_1,a_2,\ldots,a_k\in{\bF_{2^m}}\right\}.$$
The polynomial $f(x)=a_1+a_2x+\cdots+a_{k-1}x^{k-2}+a_kx^k$ possesses at most $k$ roots in ${\bF_{2^m}^*}$, which implies that $d({\C_3})\geq 2^m-1-k$. By the Singleton bound, $d({\C_3})\leq 2^m-k$. If $d({\C_3})=2^m-k$, then ${\C_3}$ is MDS, so $\mathcal{C}_3^\perp$ must also be MDS. However, Lemma \ref{le12} already shows that $\mathcal{C}_3^\perp$ is AMDS with $d(\mathcal{C}_3^\perp)=k$, contradicting the MDS property.
Consequently, $d({\C_3})=2^m-k-1$, confirming ${\C_3}$ as an NMDS code with parameters $[2^m-1,k,2^m-k-1]_{2^m}$ for $3\leq k\leq 2^m-4$. Lemma \ref{le2.2} establishes the equality $A_{2^m-k-1}=A_k^\bot$, while Lemma \ref{le2.1} directly provides the weight distribution of ${\C_3}$. Through Lemmas \ref{le2.2}, \ref{le13} and Eq.(\ref{eq(1)}), the minimum weight codewords of ${\C_3}$ support a $2$-$(2^m-1,2^m-1-k,\lambda_1)$ simple design with
\begin{equation*}
\begin{aligned}
\lambda_1&=\frac{\binom{2^m-1-k}{2}}{(2^m-1)\binom{2^m-1}{2}}\cdot A_{2^m-1-k}\\
&=\frac{(2^m-1-k)(2^m-2-k)}{2^m(k-1)k}\left[\binom{2^m-3}{k-2}+(-1)^{k+\lfloor\frac{k}{2}\rfloor}
\frac{k(k-1)}{2\lfloor\frac{k}{2}\rfloor}\binom{2^{m-1}-2}{\lfloor\frac{k}{2}\rfloor-1}\right].
\end{aligned}
\end{equation*}
Moreover, by Lemma \ref{le13}, $\lambda_1^c=\binom{2^m-3}{2^m-1-k}/\binom{2^m-3}{2^m-3-k}\cdot\lambda_1$. By Eq.(\ref{eq(1.1)}), the pair $(\mathcal{P}({\C_3}^\perp), \mathcal{B}_k({\C_3}^\perp))$ forms a complementary design to $(\mathcal{P}({\C_3}), \mathcal{B}_{2^m-1-k}({\C_3}))$.
\end{proof}

Below, we present an example that support the results given in Lemma \ref{le13} and Theorem \ref{th4.1}, which have been verified using Magma programs.
\begin{example}
Let $m=3,k=4$ and $\omega$ be a primitive element of ${\bF_{2^3}}$. Then
$$
G_3=
\left(
  \begin{array}{ccccccc}
    1 & 1 & 1 & 1 & 1 & 1 & 1 \\
    1 & \omega & \omega^2 & \omega^3 & \omega^4 & \omega^5 & \omega^6 \\
    1 & \omega^2 & \omega^4 & \omega^6 & \omega & \omega^3 & \omega^5 \\
    1 & \omega^4 & \omega & \omega^5 & \omega^2 & \omega^6 & \omega^3 \\
  \end{array}
\right).
$$
By Magma, ${\C_3}$ generated by $G_3$ is a $[7,4,3]_{2^3}$ NMDS code with weight enumerator $A(x)=1+49x^3+49x^4+882x^5+1740x^6+1645x^7.$ Moreover, $\mathcal{P}({\C_3}):=\{1,2,3,4,5,6,7\}$ and
$
\mathcal{B}_3({\C_3}):=\{\!\!\{\operatorname{supp}(\mathbf{c}):\mathbf{c}\in {\C_3},\ \operatorname{wt}(\mathbf{c})=3\}\!\!\}
=\{\!\!\{3,4,6\},\{1,2,4\},\{2,3,5\},\{1,5,6\},\{4,5,7\},\{2,6,7\},\{1,3,\\7\}\!\!\}.
$
By the definition of a $t$-designs, every pair of elements in $\mathcal{P}({\C_3})$ is contained in exactly $\lambda_1 = 1$ block of $\mathcal{B}_3({\C_3})$. Consequently, the pair $(\mathcal{P}({\C_3}), \mathcal{B}_3({\C_3}))$ supports a $2\text{-}(7,3,1)$ simple design. Analogously, $\mathcal{P}(\mathcal{C}_3^\perp):=\{1,2,3,4,5,6,7\}$ and
$
\mathcal{B}_4(\mathcal{C}_3^\perp):=\{\!\!\{\operatorname{supp}(\mathbf{c}^\perp) : \mathbf{c}^\perp \in\mathcal{C}_3^\perp,\ \operatorname{wt}(\mathbf{c}^\perp) = 4\}\!\!\}
=\{\!\!\{1, 2, 3, 6\},\{2, 3, 4, 7\},\{1, 3, 4, 5\},\{3, 5, 6, 7\},\{ 1, 2, 5, 7\},\{2, 4, 5, 6\},\{1, 4,6, 7\}\!\!\}.
$
By the definition of a $t$-design, every pair of elements in $\mathcal{P}({\C_3}^\perp)$ is contained in exactly $\lambda_1^c = 2$ blocks of $\mathcal{B}_4({\C_3}^\perp)$. Consequently, the pair $(\mathcal{P}({\C_3}^\perp), \mathcal{B}_4({\C_3}^\perp)$ supports a $2\text{-}(7,4,2)$ simple design.
\end{example}
\subsection{New NMDS codes with parameters $[2^m,k,2^m-k]_{2^m}$ supporting $3$-designs}
Let ${\C_4}$ denote the fourth family of linear codes over ${\bF_{2^m}}$ with generator matrix
\begin{equation}\label{G4}
\begin{aligned}
G_4=
\begin{pmatrix}
    1 &          \cdots & 1               &1\\
    \theta_1      & \cdots & \theta_{2^m-1}&0    \\
    \vdots&\ddots&\vdots&\vdots\\
    \theta_1^{k-2}  & \cdots & \theta_{2^m-1}^{k-2} &0 \\
    \theta_1^{k}  & \cdots & \theta_{2^m-1}^{k}&0  \\
\end{pmatrix},
\end{aligned}
\end{equation}
where $3\leq k\leq 2^m-3$. By employing proof methods similar to those used in Lemma \ref{le12}, Lemma \ref{le13}, and Theorem \ref{th4.1}, we can derive the fourth infinite family of NMDS codes with parameters $[2^m,k,2^m-k]_q$ supporting a simple $3$-design.
\begin{theorem}\label{th4.4}
Let $\mathcal{C}_4$ be a linear code over ${\bF_{2^m}}$ with generator matrix $G_4$ defined in (\ref{G4}). Then ${\C_4}$ is an NMDS code with parameters $[2^m,k,2^m-k]_{2^m}$, has exactly
$$A_{2^m-k}=A_k^\bot=\left\{
                       \begin{array}{ll}
                         \frac{2^m-1}{2^m}\binom{2^m}{k}, & \hbox{if $k$ is odd;} \\
                         \frac{(2^m-1)^2}{2^m}\left[\frac{2^m}{k(k-1)}\binom{2^m-2}{k-2}+(-1)^{\frac{3k}{2}}\binom{2^{m-1}}{\frac{k}{2}}\right], & \hbox{if $k$ is even}
                       \end{array}
                     \right.
$$
minimum weight codewords, and its weight enumerator is determined by Lemma \ref{le2.1}.
Moreover, if $4\leq k\leq 2^m-4$ is even, then the minimum weight codewords of ${\C_4}$ support a $3$-$(2^m,2^m-k,\lambda_2)$ simple design and the minimum weight codewords of ${\C_4}^\perp$ support a $3$-$(2^m,k,\lambda_2^c=\frac{\lambda_2k(k-1)(k-2)}{(2^m-k)(2^m-k-1)(2^m-k-2)})$ simple design, where
$$\lambda_2=\frac{(2^m-k)(2^m-k-1)(2^m-k-2)}{2^mk(k-1)(k-2)}\left[\binom{2^m-3}{k-3}+(-1)^{\frac{k}{2}}(k-1)\binom{2^{m-1}-2}{\frac{k}{2}-2}\right].$$
\end{theorem}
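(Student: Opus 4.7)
The plan is to mirror the four-stage proof of Theorem \ref{th4.1} (relying on Lemmas \ref{le12} and \ref{le13}), while systematically tracking how the extra column $(1,0,\ldots,0)^{\top}$ in $G_4$ changes each step. First, I would identify $\mathcal{C}_4$ with the evaluation code of polynomials $f(x)=a_1+a_2x+\cdots+a_{k-1}x^{k-2}+a_kx^k$ at all points of $\mathbb{F}_{2^m}$, where the final coordinate equals $f(0)=a_1$. Since such an $f$ has at most $k$ roots in $\mathbb{F}_{2^m}$, this immediately gives $\dim(\mathcal{C}_4)=k$ and $d(\mathcal{C}_4)\geq 2^m-k$. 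A Vandermonde-style case analysis (with vs.\ without the last column) as in Lemma \ref{le5} shows that any $k-1$ columns of $G_4$ are $\mathbb{F}_{2^m}$-linearly independent, so $d(\mathcal{C}_4^{\perp})\geq k$.

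The central calculation is the count of singular $k\times k$ submatrices of $G_4$. I would split the enumeration in two. Case A (no last column chosen) reproduces the $G_{3,1}$ computation of Lemma \ref{le12}: Lemma \ref{le2.6} yields the determinant $\prod(\theta_{i_b}-\theta_{i_a})\sum_{j=1}^k\theta_{i_j}$, vanishing on $N(k,0,\mathbb{F}_{2^m}^{*})$ subsets. Case B (last column chosen) reduces after cofactor expansion along that column to a $(k-1)\times(k-1)$ generalized Vandermonde whose rows are $1,x,\ldots,x^{k-3},x^{k-1}$; Lemma \ref{le2.6} applied with missing row $x^{k-2}$ then produces the factor $\sum_{j=1}^{k-1}\theta_{p_j}$, giving $N(k-1,0,\mathbb{F}_{2^m}^{*})$ singular subsets. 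The elementary partition identity
\[
N(k,0,\mathbb{F}_{2^m})\;=\;N(k,0,\mathbb{F}_{2^m}^{*})+N(k-1,0,\mathbb{F}_{2^m}^{*}),
\]
obtained by classifying whether $0$ lies in the $k$-subset, then yields $A_k^{\perp}=(2^m-1)N(k,0,\mathbb{F}_{2^m})$, which Lemma \ref{le2.5} converts into the two closed forms stated for odd and even $k$. Because an MDS dual would contradict this AMDS structure, the Singleton bound forces $d(\mathcal{C}_4)=2^m-k$, so $\mathcal{C}_4$ is NMDS; the full weight distribution follows from Lemmas \ref{le2.1} and \ref{le2.2}.

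The main obstacle is the 3-design property. The key structural observation is that, after identifying the coordinate set $\{1,\ldots,2^m\}$ with $\mathbb{F}_{2^m}$ (the last coordinate labelled $0$), the case analysis above shows that the blocks $\mathcal{B}_k(\mathcal{C}_4^{\perp})$ are exactly the zero-sum $k$-subsets of $\mathbb{F}_{2^m}$. For \emph{even} $k$, the affine group $\mathrm{AGL}(1,\mathbb{F}_{2^m})$ acts on this family, since translation preserves the sum (because $kv=0$ in characteristic $2$) and dilation preserves it trivially. This action is 2-transitive on $\mathbb{F}_{2^m}$, so every $3$-subset can be normalised to $\{0,1,t\}$ for some $t\in\mathbb{F}_{2^m}\setminus\{0,1\}$, and it suffices to show that
\[
N_t\;:=\;N\bigl(k-3,\,1+t,\,\mathbb{F}_{2^m}\setminus\{0,1,t\}\bigr)
\]
is independent of $t$. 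Since Lemma \ref{le2.5} does not apply directly to arbitrary subsets, I would resolve this via the additive-character expansion
\[
2^m\,N_t\;=\;\sum_{\chi}\chi(1+t)\cdot[X^{k-3}]\!\!\prod_{x\in\mathbb{F}_{2^m}\setminus\{0,1,t\}}\!\bigl(1+X\chi(x)\bigr),
\]
combined with the identities $\prod_{x\in\mathbb{F}_{2^m}}(1+X\chi(x))=(1-X^2)^{2^{m-1}}$ for nontrivial $\chi$, $\chi(0)=1$, and $\chi(1+t)=\chi(1)\chi(t)$. For each nontrivial $\chi=\chi_\alpha$, the summand depends only on the sign pair $(\chi(1),\chi(t))\in\{\pm 1\}^{2}$; and since $\{1,t\}$ is $\mathbb{F}_2$-linearly independent in $\mathbb{F}_{2^m}$ (true precisely for $t\notin\mathbb{F}_2$), the map $\alpha\mapsto(\chi_\alpha(1),\chi_\alpha(t))$ is a surjective linear map onto $\mathbb{F}_2^{2}$, so each of the four sign patterns is realised by the same number of $\alpha$'s (up to the single subtraction for the trivial character in the all-$+$ fibre). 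Grouping the four contributions expresses $2^m N_t$ as a fixed $\mathbb{Z}$-linear combination of three polynomial coefficients $[X^{k-3}](1-X)^{a}(1+X)^{b}$ with $(a,b)$ independent of $t$, which settles independence. The resulting $3$-$(2^m,k,\lambda_2^c)$ design on $\mathcal{B}_k(\mathcal{C}_4^{\perp})$ and its complementary $3$-$(2^m,2^m-k,\lambda_2)$ design on $\mathcal{B}_{2^m-k}(\mathcal{C}_4)$ then follow from Eq.~(\ref{eq(1.1)}) together with Lemma \ref{le2.2}; the closed forms for $\lambda_2$ and $\lambda_2^c$ stated in the theorem are obtained from the double-counting identity $\binom{2^m}{3}\lambda_2^c=\binom{k}{3}N(k,0,\mathbb{F}_{2^m})$ together with Lemma \ref{le2.5}.
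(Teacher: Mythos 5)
Your proposal is correct, and for the code-theoretic part it coincides with the paper's route, which simply transplants Lemma \ref{le12} and Theorem \ref{th4.1} to $G_4$: the two-case count of singular $k\times k$ submatrices (last column excluded versus included) giving $A_k^\perp=(2^m-1)\bigl(N(k,0,\mathbb{F}_{2^m}^*)+N(k-1,0,\mathbb{F}_{2^m}^*)\bigr)=(2^m-1)N(k,0,\mathbb{F}_{2^m})$, followed by the Singleton-bound argument, is exactly what the paper intends. Where you genuinely diverge is the $3$-design step. The paper argues as in Lemma \ref{le13}: it fixes the coordinates of a $t$-subset and asserts, citing the divisibility statement of Corollary \ref{coro5.5}, that the number of zero-sum completions equals $\frac{k(k-1)(k-2)}{2^m(2^m-1)(2^m-2)}N(k,0,\mathbb{F}_{2^m})$ for every triple. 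You instead observe that the blocks are precisely the zero-sum $k$-subsets of $\mathbb{F}_{2^m}$, use the $2$-transitive action of $\mathrm{AGL}(1,2^m)$ (translations preserve zero sums exactly because $k$ is even in characteristic $2$, which is where the parity hypothesis enters) to normalize any triple to $\{0,1,t\}$, and then prove by an additive-character computation that $N(k-3,1+t,\mathbb{F}_{2^m}\setminus\{0,1,t\})$ is independent of $t$, the key point being that $\alpha\mapsto(\mathrm{Tr}(\alpha),\mathrm{Tr}(\alpha t))$ is surjective onto $\mathbb{F}_2^2$ for $t\notin\mathbb{F}_2$, so the four sign patterns occur with multiplicities not depending on $t$. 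This costs more work than the paper invests, but it buys an actual proof of uniformity: divisibility of the block count by $\binom{2^m}{3}/\binom{k}{3}$, which is all Corollary \ref{coro5.5} supplies, is a necessary rather than a sufficient condition for the triple counts to be constant, so your symmetry-plus-character argument closes a step that the paper leaves essentially to the reader. Your final extraction of $\lambda_2^c=\binom{k}{3}N(k,0,\mathbb{F}_{2^m})/\binom{2^m}{3}$ and of $\lambda_2$ via Eq.~(\ref{eq(1.1)}), Eq.~(\ref{eq(1)}) and Lemma \ref{le2.2} reproduces the stated closed forms after simplification with Lemma \ref{le2.5}.
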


Below, we present an example that support the results given in Theorem \ref{th4.4}, which have been verified using Magma programs.
\begin{example}
Let $m=3,k=4$ and $\omega$ be a primitive element of ${\bF_{2^3}}$. Then
$$
G_4=
\left(
  \begin{array}{cccccccc}
    1 & 1 & 1 & 1 & 1 & 1 & 1&1 \\
    1 & \omega & \omega^2 & \omega^3 & \omega^4 & \omega^5 & \omega^6 &0\\
    1 & \omega^2 & \omega^4 & \omega^6 & \omega & \omega^3 & \omega^5 &0\\
    1 & \omega^4 & \omega & \omega^5 & \omega^2 & \omega^6 & \omega^3 &0\\
  \end{array}
\right).
$$
By Magma, ${\C_4}$ generated by $G_4$ is a $[8,4,4]_{2^3}$ NMDS code with weight enumerator $A(x)=1+98x^4+1176x^6+1344x^7+1477x^8.$ Moreover, $\mathcal{P}({\C_4}):=\{1,2,3,4,5,6,7,8\}$ and
$
\mathcal{B}_4({\C_4}):=\{\!\!\{\operatorname{supp}(\mathbf{c}) : \mathbf{c} \in {\C_4},\ \operatorname{wt}(\mathbf{c}) = 4\}\!\!\}
=\{\!\!\{1,3,7,8\},\{2,4,5,6\},\{1,2,3,6\},\{3,5,6,7\},\{1,2,4,8\},\{1,\\4,6,7\},\{4,5,7,8\},\{2,3,4,7\},\{1,5,6,8\},\{1,3,4,5\},
\{1,2,5,7\},\{3,4,6,8\},\{2,6,7,8\},\{2,3,5,\\8\}\!\!\}.
$
By the definition of $t$-designs, every $3$-subset in $\mathcal{P}({\C_4})$ is contained in exactly $\lambda = 1$ block of $\mathcal{B}_4(\mathcal{C}_4)$. Consequently, the pair $(\mathcal{P}({\C_4}), \mathcal{B}_4({\C_4}))$ supports a $3\text{-}(8,4,1)$ simple design. Moreover, the pair $(\mathcal{P}({\C_4}^\perp), \mathcal{B}_4({\C_4}^\perp))$ supports the same $3\text{-}(8,4,1)$ simple design.
\end{example}

\begin{remark}
We compare the NMDS codes constructed in Theorems \ref{th4.1} and \ref{th4.4} with the known ones in \cite{Heng2023,Heng20231,Xu2022} as follows:
\begin{itemize}
  \item Setting $k=3,5$ and $6$ in Theorems \ref{th4.1} and \ref{th4.4}, we can directly obtain the corresponding results of Theorems 18, 27, 35 and 38 of \cite{Heng2023}, respectively.
  \item Setting $k=4$ in Theorem \ref{th4.4}, we can directly obtain the corresponding results for $h=1$ studied in \cite[Theorem 7]{Xu2022} and $h=2$ studied in
  \cite[Theorem 15]{Heng20231}.
  \item Combining the results of Lemma \ref{le13} with those in Theorem \ref{th4.1}, we provide an affirmative answer to the conjecture of Heng and Wang (See \cite[Conjecture36]{Heng2023}).
\end{itemize}

To our knowledge, these codes constructed in Theorem \ref{th4.1} (resp. Theorem \ref{th4.4}) are the first infinite family of NMDS codes with arbitrary dimensions supporting $2$-designs (resp. $3$-designs).
\end{remark}

\section{Summary and concluding remarks}\label{sec.5}

This paper presented two new infinite families of NMDS codes over $\mathbb{F}_q$ of lengths $q+1$ and $q+2$ with arbitrary dimensions for any  prime power $q$. These codes generalized the related constructions of NMDS codes given in \cite{DingY2024}. Compared with the known ones constructed in \cite{Ding2020,Heng2022,Li2023,Wang2021}, they had different generator matrices and weight distributions, and more flexible parameters. Moreover, we established two infinite families of NMDS codes over $\mathbb{F}_{2^m}$ with arbitrary dimensions supporting $2$-designs and $3$-designs. To our knowledge, these constitute the first infinite families of NMDS codes with arbitrary dimensions admitting $t$-designs for $t\geq2$. Notably, our results not only generalize fixed-dimension constructions in \cite{Heng2023,Heng20231,Xu2022}, but also provide an affirmative resolution to the Heng-Wang conjecture \cite{Heng2023}.

Building on the work of Tang and Ding \cite{Tang2021}, who constructed the first infinite family of NMDS codes with length $q+1$ and dimension 6 holding $4$-designs over $\mathbb{F}_{2^m}$, we propose the following open problem: \emph{Does there exist an infinite family of NMDS codes with arbitrary dimensions supporting $4$-designs?}

\newpage

\section*{Appendix}\label{Appendix}

The central challenge in constructing NMDS codes from specialized generator matrices with arbitrary dimensions that support $t$-designs lies in determining their weight distributions through the computation of
$$N(k,0,D) = \#\left\{\{x_1,\ldots,x_k\} \subseteq D : \sum_{i=1}^k x_i = 0\right\},$$
where $ D = \mathbb{F}_q $ or $\mathbb{F}_q^*$. Recent studies \cite{DingY2024,Heng2023,Li2023,Xu2023} explicitly evaluated $N(k,0,D)$ for $3\leq k\leq 6$. Notably, this problem is equivalent to the classical subset sum problem, which was systematically resolved for a general $k$ by Li and Wan~\cite{Li2008}, with further refinements provided by Pavone \cite{Pavone2021}. During the initial preparation of this work, we independently sought to generalize the reulst of $N(k,0,D)$ for $3\leq k\leq 6$ in \cite{DingY2024,Heng2023,Li2023,Xu2023}  to a general case, prior to recognizing the broader solutions in \cite{Li2008,Pavone2021}.

Our contribution includes an alternative algebraic method for computing $N(k,0,D)$ in the binary case $(D= \mathbb{F}_{2^m} \,\,{\rm or}\,\, \mathbb{F}_{2^m}^* )$, yielding formulas equivalent to those in Theorem 1.2 of \cite{Li2008}. Beyond reproducing known results, we establish novel recurrence relations for $N(k,0,\mathbb{F}_{2^m})$  and  $N(k,0,\mathbb{F}_{2^m}^*)$, as well as a combinatorial identity that may facilitate future analyses.

First, we establish two new recurrence relations for $N(k,0,\mathbb{F}_{2^m}^*)$.

\begin{lemma}\label{th5.1}
Let $\{x_1,\ldots,x_k\}\subseteq\mathbb{F}_{2^m}^*$ be a $k$-subset, where $ 3 \leq k < 2^m- 1 $.
Then the following results hold.

(i) For odd $ k = 2\ell + 1 $ ($ \ell \geq 1 $),
\begin{equation}\label{eq(6)}
N(2\ell+1,0,{\mathbb{F}_{2^m}^*}) + \frac{(2^m-2\ell)}{2\ell} N(2\ell-1,0,{\mathbb{F}_{2^m}^*}) = \frac{1}{2^m} \binom{2^m}{2\ell+1}.
\end{equation}

(ii) For even $ k = 2\ell $ ($ \ell \geq 2 $),
\begin{equation}\label{eq(7)}
N(2\ell,0,{\mathbb{F}_{2^m}^*}) = \frac{(2^m-2\ell)}{2\ell}N(2\ell-1,0,{\mathbb{F}_{2^m}^*}).
\end{equation}
\end{lemma}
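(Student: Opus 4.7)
My plan is to establish (ii) first via a double-counting argument exploiting a characteristic-$2$ translation bijection, and then to derive (i) by partitioning all $2\ell$-subsets of $\mathbb{F}_{2^m}^*$ according to the value of their subset-sum.

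For (ii), I would count the set $\mathcal{X} = \{(S,x) : S \subseteq \mathbb{F}_{2^m}^*,\ |S|=2\ell,\ \sum_{s \in S} s = 0,\ x \in S\}$ in two ways. Enumerating by $S$ first gives $|\mathcal{X}| = 2\ell \cdot N(2\ell,0,\mathbb{F}_{2^m}^*)$. For the second count, I would define $\Phi(S,x) = (U,x)$ with $U := \{s+x : s \in S \setminus \{x\}\}$. In characteristic $2$ one verifies that $|U| = 2\ell-1$; that $U \subseteq \mathbb{F}_{2^m}^*$, since $s+x=0$ would force $s=x \notin S \setminus\{x\}$; that $x \notin U$; and crucially that $\sum_{u \in U} u = \sum_{s \in S\setminus\{x\}} s + (2\ell-1)x = x + x = 0$, using $(2\ell-1)x = x$ in $\mathbb{F}_{2^m}$. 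The inverse $\Phi^{-1}(U,x) = ((U+x) \cup \{x\},\, x)$ is well-defined for any $x \in \mathbb{F}_{2^m}^* \setminus U$, giving $2^m-2\ell$ choices of $x$ for each admissible $U$, and therefore $|\mathcal{X}| = (2^m-2\ell) \cdot N(2\ell-1,0,\mathbb{F}_{2^m}^*)$. Equating the two counts yields (ii).

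For (i), I would partition the $\binom{2^m-1}{2\ell}$ subsets $S \subseteq \mathbb{F}_{2^m}^*$ of size $2\ell$ into three disjoint classes according to $y := \sum_{s \in S} s$: (A) $y = 0$, contributing $N(2\ell,0,\mathbb{F}_{2^m}^*)$; (B1) $y \neq 0$ and $y \in S$, in bijection (via $S \mapsto (S\setminus\{y\},y)$) with pairs $(T,y)$ where $T$ is a $(2\ell-1)$-subset summing to $0$ (by characteristic-$2$ cancellation of $y$) and $y \in \mathbb{F}_{2^m}^* \setminus T$, contributing $(2^m-2\ell)N(2\ell-1,0,\mathbb{F}_{2^m}^*)$; and (B2) $y \neq 0$ and $y \notin S$, in bijection (via $S \mapsto (S\cup\{y\},y)$) with pairs $(S',y)$ where $S'$ is a $(2\ell+1)$-subset summing to $0$ and $y \in S'$, contributing $(2\ell+1)N(2\ell+1,0,\mathbb{F}_{2^m}^*)$. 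Summing gives
\[
\binom{2^m-1}{2\ell} = N(2\ell,0,\mathbb{F}_{2^m}^*) + (2^m-2\ell)N(2\ell-1,0,\mathbb{F}_{2^m}^*) + (2\ell+1)N(2\ell+1,0,\mathbb{F}_{2^m}^*).
\]
Substituting (ii) to eliminate $N(2\ell,0,\mathbb{F}_{2^m}^*)$, dividing both sides by $2\ell+1$, and using $\frac{1}{2\ell+1}\binom{2^m-1}{2\ell} = \frac{1}{2^m}\binom{2^m}{2\ell+1}$ yields (i).

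The main technical point is verifying that the translation map $\Phi$ is a genuine bijection onto the correct codomain; this rests on the characteristic-$2$ identity $(2\ell-1)x = x$, which is precisely what makes the translated set have zero sum and which is the algebraic core of the argument. Once (ii) is established, (i) reduces to bookkeeping of the three disjoint classes together with a standard binomial manipulation.
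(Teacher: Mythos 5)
Your proof is correct, and it takes a genuinely different route from the paper's. The paper first shows that $N(k,0,\mathbb{F}_{2^m})=\frac{1}{2^m}\binom{2^m}{k}$ for odd $k$ by encoding each $k$-subset of $\mathbb{F}_{2^m}$ as an $m\times k$ binary matrix and using row complementation (equivalently, translation of the whole subset) to show that the $2^m$ row-parity classes are equinumerous; combined with the splitting $N(k,0,\mathbb{F}_{2^m})=N(k,0,\mathbb{F}_{2^m}^*)+N(k-1,0,\mathbb{F}_{2^m}^*)$ this gives a precursor of (i), and (ii) is obtained separately by normalizing a zero-sum $2\ell$-subset by one of its elements ($y_i=x_i/x_{2\ell}$) and then shifting ($z_i=y_i+1$). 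You instead stay entirely inside $\mathbb{F}_{2^m}^*$: your proof of (ii) is a clean double count of flagged pairs $(S,x)$ via the translation bijection $S\mapsto (S\setminus\{x\})+x$, and your proof of (i) partitions all $\binom{2^m-1}{2\ell}$ subsets of $\mathbb{F}_{2^m}^*$ by the value of their sum $y$ and whether $y\in S$. Both arguments ultimately rest on the characteristic-$2$ identity $kx=x$ for odd $k$, but yours avoids the detour through $\mathbb{F}_{2^m}$ and the matrix/parity formalism, and your treatment of (ii) is more transparent than the paper's multiplicative normalization (whose intermediate counts, as written there, only become correct after the two cases are summed). One small housekeeping point: in deriving (i) for $\ell=1$ you invoke (ii) with $\ell=1$, which lies outside the stated range $\ell\geq 2$; this is harmless since $N(1,0,\mathbb{F}_{2^m}^*)=N(2,0,\mathbb{F}_{2^m}^*)=0$ in characteristic $2$ (and your bijection for (ii) in fact works verbatim for $\ell=1$), but it is worth a sentence.
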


\begin{proof}
Let $N(k,0,{\bF_{2^m}})$ denote the number of distinct $k$-element subsets $ \{ x_1, x_2, \ldots, x_k\}  \subseteq \mathbb{F}_{2^m} $  with $ \sum_{i=1}^k x_i = 0 $. We partition these subsets based on whether they contain $0$:
\begin{equation}\label{eq(7.1)}
\begin{aligned}
N(k,0,{\bF_{2^m}}) = N(k,0,{\bF_{2^m}^*}) + N(k-1,0,{\bF_{2^m}^*}).
\end{aligned}
\end{equation}
Let $ \mathbb{F}_{2^m}^* = \langle \omega \rangle $ with basis $\{1, \omega, \omega^2, \ldots, \omega^{m-1}\}$ over $ \mathbb{F}_2 $. Under this basis, each $ x_i \in \mathbb{F}_{2^m} $ corresponds to a vector $\mathbf{x}_i = (x_{i,0}, x_{i,1}, \ldots, x_{i,m-1})^\top \in \mathbb{F}_2^m $, where $ x_i = \sum_{j=0}^{m-1} x_{i,j} \omega^j $, and a $k$-element subset $\mathcal{X}=\{ x_1, x_2, \ldots, x_k\} \subseteq \mathbb{F}_{2^m}$ corresponds to an $m\times k $ binary matrix as follows:
$$
M(\mathcal{X})
= \begin{pmatrix}
	x_{1,0} & x_{2,0} & \cdots & x_{k,0} \\
	x_{1,1} & x_{2,1} & \cdots & x_{k,1} \\
	\vdots & \vdots & \ddots & \vdots \\
	x_{1,m-1} & x_{2,m-1} & \cdots & x_{k,m-1}
\end{pmatrix}
=\left(
\begin{array}{c}
	\mathbf{r}_0 \\
	\mathbf{r}_1 \\
	\vdots \\
	\mathbf{r}_{m-1} \\
\end{array}
\right)
$$
We define a row parity of $M(\mathcal{X})$ as follows:
\begin{equation*}
\begin{aligned}
 \delta(\mathcal{X})&=(\delta(\mathbf{r}_0),\delta(\mathbf{r}_1),\ldots,\delta(\mathbf{r}_{m-1}))\in \mathbb{F}_2^m,
\end{aligned}
\end{equation*}
where $\delta(\mathbf{r}_j) = \sum_{i=1}^{k} x_{i,j} \,\, {\rm mod }\,\, 2, \,\, 0\leq j\leq m-1.$
The row parity define equivalent relation on the set of matrices $M(\mathcal{X})$ for all $k$-element subsets. Two matices are \emph{equivalent} if they have identical row parities. This partitions  all $m\times k $ binary matrices into $2^m$ equivalence classes since
$\mathbb{F}_2^m$ has $2^m$ distinct vectors.

For odd $k$, we show that all classes have equal size. Let $A(\delta(\mathcal{X}))$ denote the class of mathices with row parity $\delta(\mathcal{X})$ for a $k$-elements subset. Let $\bar{x}_{ij} = x_{ij}+ 1 \,\, {\rm mod} \,\, 2$, and $\bar{\mathbf{r}_j}=(\bar{x}_{1,j}, \bar{x}_{2,j},\ldots, \bar{x}_{k,j})$ denote the component-wise complement of the $j$-th row $\mathbf{r}_j$ of $M(\mathcal{X})$ for $0 \leq j \leq m-1$. Since $k$ is odd, we have
\begin{equation}\label{eq:parity-flip6}
\delta(\bar{\mathbf{r}_j}) = \delta(\mathbf{r}_j) + 1 \quad {\rm mod} \,\, 2.
\end{equation}
Consider two distinct equivalence classes $A(\delta(\mathcal{X}_1))$ and $A(\delta(\mathcal{X}_2))$, where $\delta(\mathcal{X}_1)\neq \delta(\mathcal{X}_2) \in \mathbb{F}_2^m$. Let $J = \{j \mid \delta(\mathbf{r}_j^{(1)}) \neq \delta(\mathbf{r}_j^{(2)})\}$ index the differing parity coordinates, where $\delta(\mathbf{r}_j^{(t)})$ denotes the $j$-th coordinates of $\delta(\mathcal{X}_t)$ with $t=1,2$. For any $M(\mathcal{X}_1) \in A(\delta(\mathcal{X}_1))$, apply the component-wise complement operation to rows $j \in J$ of $M(\mathcal{X}_1)$. By the parity-flip property \eqref{eq:parity-flip6}, this maps $M(\mathcal{X}_1)$ to $M(\mathcal{X}_2) \in A(\delta(\mathcal{X}_2))$. Since the operation is invertible (applying it twice recovers $M(\mathcal{X}_1)$), it induces a bijection between $A(\delta(\mathcal{X}_1))$ and $A(\delta(\mathcal{X}_2))$. Thus, all equivalence classes have equal size:
\[
\#A(\delta(\mathcal{X})) = \frac{1}{2^m} \binom{2^m}{k}, \quad \forall \delta(\mathcal{X}) \in \mathbb{F}_2^m.
\]
Specially, for $\delta(\mathcal{X}) = \mathbf{0}$,
$
N(k,0,\mathbb{F}_{2^m}) = \#A(\mathbf{0}) = \frac{1}{2^m} \binom{2^m}{k}.
$
By the recurrence relation (\ref{eq(7.1)}), for odd $k = 2\ell + 1$ with $\ell \geq 1$,
\begin{equation}\label{eq(8)}
N(2\ell+1,0,\mathbb{F}_{2^m}^*) + N(2\ell,0,\mathbb{F}_{2^m}^*) = \frac{1}{2^m} \binom{2^m}{2\ell+1}.
\end{equation}

For even \( k = 2\ell \) (\(\ell \geq 2\)), consider subsets \( \mathcal{X} \subseteq \mathbb{F}_{2^m} \) with \( \sum_{i=1}^{2\ell} x_i = 0 \). Let \( N'(2\ell,0,{\mathbb{F}_{2^m}}) \) denote the number of \(2\ell\)-element subsets containing \(0\), and \( N''(2\ell,0,{\mathbb{F}_{2^m}}) \) the number of subsets with all elements nonzero. Their enumeration proceeds as follows:
\begin{itemize}
  \item ~If \( 0 \in \mathcal{X} \), w.l.o.g. assume \( x_{2\ell} = 0 \). Then \( \sum_{i=1}^{2\ell-1} x_i = 0 \), where all \( x_i \in \mathbb{F}_{2^m}^* \). Each configuration is counted \( 2\ell \) times (choosing which element is zero), giving
\[
N'(2\ell,0,\mathbb{F}_{2^m}) = \frac{N(2\ell-1,0,\mathbb{F}_{2^m}^*)}{2\ell}.
\]
  \item  ~If \( x_i \neq 0 \), w.l.o.g. let \( x_{2\ell} \neq 0 \). Then \( \sum_{i=1}^{2\ell-1} x_i = x_{2\ell} \). Normalizing by \( x_{2\ell}^{-1} \), set \( y_i = x_i/x_{2\ell} \) (\( y_i \neq 1 \)) to get \( \sum_{i=1}^{2\ell-1} y_i = 1 \). Substituting \( z_i = y_i + 1 \) yields
\[
\sum_{i=1}^{2\ell-1} z_i = 0, \quad z_i \in \mathbb{F}_{2^m}^*,
\]
with solution count \( N(2\ell-1,0,\mathbb{F}_{2^m}^*) \). Scaling by \( 2^m - 1 \) (choices for \( x_{2\ell} \)) and dividing by \( 2\ell \) gives
\[
N''(2\ell,0,\mathbb{F}_{2^m}) = \frac{(2^m - 1)N(2\ell-1,0,\mathbb{F}_{2^m}^*)}{2\ell}.
\]
\end{itemize}
Combining both cases:
\[
N(2\ell,0,\mathbb{F}_{2^m}^*) + N(2\ell-1,0,\mathbb{F}_{2^m}^*) = \frac{2^m N(2\ell-1,0,\mathbb{F}_{2^m}^*)}{2\ell}.
\]
Substituting into \eqref{eq(8)} yields the recurrence:
\[
N(2\ell+1,0,\mathbb{F}_{2^m}^*) + \frac{2^m - 2\ell}{2\ell}N(2\ell-1,0,\mathbb{F}_{2^m}^*) = \frac{1}{2^m}\binom{2^m}{2\ell+1}.
\]

\end{proof}

In the following, we obtain explicit formulas for $N(k,0,{\bF_{2^m}^*})$ and $N(k,0,{\bF_{2^m}})$ by using the recurrence relations given in Lemma \ref{th5.1}.
\begin{theorem}\label{th5.2}
Follow the notation and conditions of Lemma \ref{th5.1}. For $\ell\in\mathbb{Z}^+$, the following results hold.
\par
${\rm (i)}$~For odd $ k = 2\ell + 1 $ ($ \ell \geq 1 $),
\begin{equation}\label{eq(12)}
\begin{aligned}
N(2\ell+1,0,{\bF_{2^m}^*}) = \frac{1}{{2^m}} \sum_{t=0}^{\ell-1} (-1)^t \left( \prod_{j=0}^{t-1} \frac{{2^m} - 2(\ell - j)}{2(\ell - j)} \right) \binom{{2^m}}{2\ell+1 - 2t}.
\end{aligned}
\end{equation}
\par
${\rm (ii)}$~For even $ k = 2\ell $ ($ \ell \geq 2 $),
\begin{equation}\label{eq(13)}
\begin{aligned}
N(2\ell,0,{\bF_{2^m}^*}) = \frac{2^m-2\ell}{2^{m+1}\ell} \sum_{t=0}^{\ell-2} (-1)^t \left( \prod_{j=0}^{t-1} \frac{2^m - 2(\ell - j-1)}{2(\ell - j-1)} \right) \binom{2^m}{2\ell-1 - 2t}.
\end{aligned}
\end{equation}
\end{theorem}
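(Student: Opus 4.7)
The plan is to derive the explicit formulas directly from the two recurrences established in Lemma \ref{th5.1} by unfolding them telescopically, using $N(1,0,\mathbb{F}_{2^m}^*)=0$ as the base case (since the only way to have a single element of $\mathbb{F}_{2^m}^*$ summing to zero would require the element itself to be $0$, which is excluded).

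For the odd case $k=2\ell+1$, I would rewrite Lemma \ref{th5.1}(i) as
\begin{equation*}
N(2\ell+1,0,\mathbb{F}_{2^m}^*)=\frac{1}{2^m}\binom{2^m}{2\ell+1}-\frac{2^m-2\ell}{2\ell}\,N(2\ell-1,0,\mathbb{F}_{2^m}^*),
\end{equation*}
and then iteratively substitute the analogous identity for $N(2\ell-1,0,\mathbb{F}_{2^m}^*)$, $N(2\ell-3,0,\mathbb{F}_{2^m}^*)$, and so on. At step $t$, the accumulated coefficient in front of $\binom{2^m}{2\ell+1-2t}$ is $(-1)^{t}\prod_{j=0}^{t-1}\frac{2^m-2(\ell-j)}{2(\ell-j)}$, which is precisely the factor appearing in \eqref{eq(12)}. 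The iteration halts after $\ell$ steps because the residual term is a scalar multiple of $N(1,0,\mathbb{F}_{2^m}^*)=0$, and this verifies \eqref{eq(12)}. A clean way to present this is by induction on $\ell$: the base case $\ell=1$ is immediate from the recurrence, and the inductive step amounts to plugging the formula for $\ell-1$ into the recurrence and re-indexing the sum.

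For the even case $k=2\ell$, I would apply Lemma \ref{th5.1}(ii),
\begin{equation*}
N(2\ell,0,\mathbb{F}_{2^m}^*)=\frac{2^m-2\ell}{2\ell}\,N(2\ell-1,0,\mathbb{F}_{2^m}^*),
\end{equation*}
and then substitute the formula \eqref{eq(12)} (with $\ell$ replaced by $\ell-1$) for $N(2\ell-1,0,\mathbb{F}_{2^m}^*)$. Combining the prefactor $\frac{2^m-2\ell}{2\ell}$ with the $\frac{1}{2^m}$ in front of the sum yields the leading constant $\frac{2^m-2\ell}{2^{m+1}\ell}$ appearing in \eqref{eq(13)}, while the range of summation becomes $0\le t\le\ell-2$ (matching the $\ell-1$ upper bound shifted by one). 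The product factor $\prod_{j=0}^{t-1}\frac{2^m-2(\ell-j-1)}{2(\ell-j-1)}$ inside the summand is exactly what one obtains by relabeling indices in the odd formula.

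The main obstacle is bookkeeping rather than conceptual: one must carefully track how the indices shift in the cumulative product when unfolding the recurrence, and verify that the empty product convention (for $t=0$) correctly reproduces the leading $\binom{2^m}{2\ell+1}$ term. Once the indexing is settled, both identities fall out of a straightforward induction. I would verify the formulas at $\ell=1$ and $\ell=2$ as a sanity check before writing up the induction step in full.
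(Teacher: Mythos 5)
Your proposal is correct and follows essentially the same route as the paper: the authors also prove \eqref{eq(12)} by induction on $\ell$ after unfolding the recurrence \eqref{eq(6)} (with the base case $N(3,0,\mathbb{F}_{2^m}^*)=\frac{1}{2^m}\binom{2^m}{3}$, which is your $N(1,0,\mathbb{F}_{2^m}^*)=0$ observation in disguise), and they obtain \eqref{eq(13)} by substituting \eqref{eq(12)} with $\ell$ replaced by $\ell-1$ into the recurrence \eqref{eq(7)}. The index bookkeeping you describe (the shift $t'=t+1$ in the inductive step and the empty-product convention at $t=0$) is exactly what appears in the paper's argument.
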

\begin{proof}
By the recurrence relation (\ref{eq(6)}), it is easy to see that
\begin{equation}\label{eq(14)}
\begin{aligned}
\left\{
  \begin{array}{ll}
    N(2\ell+1,0,{\bF_{2^m}^*})=\frac{1}{2^m}\binom{2^m}{2\ell+1}-\frac{2^m-2\ell}{2\ell}\cdot N(2\ell-1,0,{\bF_{2^m}^*}), & \hbox{} \\
    N(2\ell-1,0,{\bF_{2^m}^*})=\frac{1}{2^m}\binom{2^m}{2\ell-1}-\frac{2^m-2\ell+2}{2\ell-2}\cdot N(2\ell-3,0,{\bF_{2^m}^*}), & \hbox{} \\
    ~~~~~~~~~~~~~~~~~~\vdots & \hbox{} \\
    N(5,0,{\bF_{2^m}^*})=\frac{1}{2^m}\binom{2^m}{5}-\frac{2^m-4}{4}\cdot N(3,0,{\bF_{2^m}^*}), & \hbox{} \\
    N(3,0,{\bF_{2^m}^*})=\frac{1}{2^m}\binom{2^m}{3}. & \hbox{} \\
  \end{array}
\right.
\end{aligned}
\end{equation}
We next use mathematical induction to prove that the result (i) holds. For \( \ell = 1 \), we have
\[
N(3,0,{\bF_{2^m}^*}) = \frac{1}{2^m}\binom{2^m}{3}.
\]
Let the empty product (when \( t = 0 \)) be interpreted as 1. Then the proposed formula for \( \ell = 1 \) gives
\[
N(3,0,{\bF_{2^m}^*}) = \frac{1}{2^m} \sum_{t=0}^{0} (-1)^0\cdot1\cdot \binom{2^m}{3 - 0} = \frac{1}{2^m}\binom{2^m}{3}.
\]
Thus, the case with $\ell=1$ holds.

Assume the formula holds for \( \ell = s \), i.e.,
\[
N(2s+1,0,{\bF_{2^m}^*}) = \frac{1}{2^m} \sum_{t=0}^{s-1} (-1)^t \left( \prod_{j=0}^{t-1} \frac{2^m - 2(s - j)}{2(s - j)} \right) \binom{2^m}{2s+1 - 2t}.
\]
Using the recurrence relation for \( N(2(s+1)+1,0,{\bF_{2^m}^*}) = N(2s+3,0,{\bF_{2^m}^*}) \), then
\[
N(2s+3,0,{\bF_{2^m}^*}) = \frac{1}{2^m}\binom{2^m}{2s+3} - \frac{2^m - 2(s+1)}{2(s+1)}N(2s+1,0,{\bF_{2^m}^*}).
\]
Substitute the inductive hypothesis into the equation, then
\begin{equation*}
\begin{aligned}
&N(2s+3,0,{\bF_{2^m}^*})\\
&= \frac{1}{2^m} \left[ \binom{2^m}{2s+3} + \sum_{t=0}^{s-1} (-1)^{t+1} \left( \frac{2^m - 2(s+1)}{2(s+1)} \prod_{j=0}^{t-1} \frac{2^m - 2(s - j)}{2(s - j)} \right) \binom{2^m}{2s+1 - 2t} \right].
\end{aligned}
\end{equation*}
Apply the change of variable \( t' = t + 1 \) to the summation,
\[
\sum_{t'=1}^{s} (-1)^{t'} \left( \prod_{j=0}^{t'-1} \frac{2^m - 2((s+1) - j)}{2((s+1) - j)} \right) \binom{2^m}{2s+3 - 2t'}.
\]
Combine this with the first term (\( t = 0 \)),
\[
N(2s+3,0,{\bF_{2^m}^*}) = \frac{1}{2^m} \sum_{t=0}^{s} (-1)^t \left( \prod_{j=0}^{t-1} \frac{2^m - 2((s+1) - j)}{2((s+1) - j)} \right) \binom{2^m}{2(s+1)+1 - 2t}.
\]
This matches the proposed formula for \( \ell = s + 1 \). Hence, the inductive step holds.

By mathematical induction, the closed-form expression for \( N(2\ell+1,0,{\bF_{2^m}^*}) \) holds for all integers \( \ell \geq 1 \). Moreover, the result (ii) can be directly obtained from the result (i) and recurrence relation (\ref{eq(7)}).
\end{proof}

From the recurrence relations (\ref{eq(7)}), (\ref{eq(7.1)}), (\ref{eq(8)}) and Eq.(\ref{eq(12)}), we can directly obtain explicit formulas for $N(k,0,\mathbb{F}_{2^m})$.

\begin{theorem}\label{th5.3}
Follow the notation and conditions of Lemma \ref{th5.1}. Define
$$N(k,0,{\bF_{2^m}})=\#\left\{\{x_1,x_2,\ldots,x_k\}\subseteq\mathbb{F}_{2^m}:\sum_{i=1}^kx_i=0\right\},$$
where $3\leq k< 2^m-1$. For $\ell\in\mathbb{Z}^+$,
\begin{equation}\label{eq(15)}
\begin{aligned}
N(k,0,{\bF_{2^m}})=
\left\{
  \begin{array}{ll}
    \frac{1}{2^m}\binom{2^m}{2\ell+1}, & \hbox{if $k=2\ell+1,\ell\geq1$;} \\
    \frac{1}{2\ell}\sum_{t=0}^{\ell-2}(-1)^t\left(\prod_{j=0}^{t-1}\frac{2^m-2(\ell-1-j)}{2(\ell-1-j)}\right)\binom{2^m}{2\ell-1-2t}, & \hbox{if $k=2\ell,\ell\geq2$.}
  \end{array}
\right.
\end{aligned}
\end{equation}
\end{theorem}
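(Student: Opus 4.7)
The plan is to derive Theorem \ref{th5.3} as a direct algebraic consequence of results already established, rather than by a fresh combinatorial argument. The key link is the partition identity \eqref{eq(7.1)}, namely
\[
N(k,0,\mathbb{F}_{2^m}) = N(k,0,\mathbb{F}_{2^m}^*) + N(k-1,0,\mathbb{F}_{2^m}^*),
\]
which reduces counting zero-sum $k$-subsets of $\mathbb{F}_{2^m}$ to counts over $\mathbb{F}_{2^m}^*$ that are already resolved in Theorem \ref{th5.2}. Both cases of the asserted formula \eqref{eq(15)} will fall out after a short manipulation.

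For the odd case $k=2\ell+1$, observe that the right-hand side of \eqref{eq(7.1)} is exactly the left-hand side of \eqref{eq(8)}. Therefore
\[
N(2\ell+1,0,\mathbb{F}_{2^m}) = N(2\ell+1,0,\mathbb{F}_{2^m}^*) + N(2\ell,0,\mathbb{F}_{2^m}^*) = \frac{1}{2^m}\binom{2^m}{2\ell+1},
\]
which is the first line of \eqref{eq(15)}. No further input is required here.

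For the even case $k=2\ell$, I first apply the even-to-odd recurrence \eqref{eq(7)} to rewrite $N(2\ell,0,\mathbb{F}_{2^m}^*)$ as $\tfrac{2^m-2\ell}{2\ell}\,N(2\ell-1,0,\mathbb{F}_{2^m}^*)$, so that \eqref{eq(7.1)} collapses to
\[
N(2\ell,0,\mathbb{F}_{2^m}) = \left(\frac{2^m-2\ell}{2\ell}+1\right)N(2\ell-1,0,\mathbb{F}_{2^m}^*) = \frac{2^m}{2\ell}\,N(2\ell-1,0,\mathbb{F}_{2^m}^*).
\]
Substituting the explicit expression \eqref{eq(12)} for $N(2\ell-1,0,\mathbb{F}_{2^m}^*)$, which is read as \eqref{eq(12)} with the outer index $\ell$ replaced by $\ell-1$, the prefactor $2^m/(2\ell)$ cancels the overall $1/2^m$ inside \eqref{eq(12)} and produces the claimed closed form for the even case.

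The proof contains no genuine obstacle; it is essentially bookkeeping built on top of Theorem \ref{th5.2}. The only point that requires care is the reindexing $\ell\mapsto\ell-1$ in \eqref{eq(12)}: one must verify that the summation range $0\le t\le\ell-2$, the binomial $\binom{2^m}{2\ell-1-2t}$, and the product $\prod_{j=0}^{t-1}\tfrac{2^m-2(\ell-1-j)}{2(\ell-1-j)}$ align term-by-term with the statement of the theorem. Once this reindexing is performed consistently, both lines of \eqref{eq(15)} are established, completing the proof.
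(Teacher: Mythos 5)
Your proposal is correct and follows essentially the same route as the paper, which likewise states that Theorem \ref{th5.3} follows directly from the recurrences \eqref{eq(7)}, \eqref{eq(7.1)}, \eqref{eq(8)} and the closed form \eqref{eq(12)}; you have simply written out the bookkeeping (the collapse $\tfrac{2^m-2\ell}{2\ell}+1=\tfrac{2^m}{2\ell}$ and the reindexing $\ell\mapsto\ell-1$ in \eqref{eq(12)}) that the paper leaves implicit. All index ranges, binomials, and products align with \eqref{eq(15)} as you claim.
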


\begin{remark}
It is easy to check that the value of $N(2\ell+1,0,\mathbb{F}_{2^m}^*)$ and $N(2\ell-1,0,\mathbb{F}_{2^m}^*)$ given in \cite[Theorem 1.2]{Li2008} satisfy the recurrence relations (\ref{eq(6)}) and (\ref{eq(7)}). On the another hand, Theorems \ref{th5.2} and \ref{th5.3} present the explicit formulas for $N(k,0,{\bF_{2^m}^*})$ and $N(k,0,{\bF_{2^m}})$ by using the recurrence relations (\ref{eq(6)}) and (\ref{eq(7)}) given in Lemma \ref{th5.1}. Therefore, we conclude that the results of Theorems \ref{th5.2} and \ref{th5.3} are equivalent to those in \cite[Theorem 1.2]{Li2008} for binary case.

Combining our results given in Theorems \ref{th5.2} and \ref{th5.3}  with those in \cite[Theorem 1.2]{Li2008}, we can directly derive a combinatorial identity as follows:
\begin{equation*}
\begin{aligned}
  \sum_{t=0}^{\ell-1}(-1)^t\left(\prod_{j=0}^{t-1}\frac{2^m-2(\ell-j)}{2(\ell-j)}\right)\binom{2^m}{2\ell+1-2t}=
  \binom{2^m-1}{2\ell+1}+(-1)^{\ell+1}(2^m-1)\binom{2^{m-1}-1}{\ell},
\end{aligned}
\end{equation*}
where $\ell\geq1$ is a positive integer. Furthermore, the identity may be instrumental in addressing combinatorial problems characterized by symmetry, binary structures, or recursive partitioning.
\end{remark}

Combining with the recurrence relation (\ref{eq(7)}) and Eqs.\eqref{eq(12)}--\eqref{eq(15)}, we directly obtain the following results.
\begin{corollary}\label{coro5.5}
Follow the notation and conditions of Lemma \ref{th5.1}. Then the following results hold.
\begin{itemize}
  \item [(i)] For any integer $k$ with $3\leq k\leq 2^m-4$,
  $$\frac{(2^m-1)(2^m-2)}{k(k-1)}|N(k,0,{\bF_{2^m}^*}).$$
  \item [(ii)] For even integer $k$ with $4\leq k\leq 2^m-4$,
  $$\frac{2^m(2^m-1)(2^m-2)}{k(k-1)(k-2)}|N(k,0,{\bF_{2^m}}).$$
\end{itemize}
\end{corollary}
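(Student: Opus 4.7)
The plan is to give combinatorial proofs of both divisibility statements via double counting, where the constancy of the relevant incidence counts is provided by a group action that preserves zero-sum subsets and is 2-transitive (for part (i)) or 3-transitive (for part (ii)) on the ambient set. Interpreting the two ratios as integers has a natural meaning: they will count the number of zero-sum $k$-subsets containing a prescribed 2-subset (resp.\ 3-subset), so proving integrality amounts to showing these incidence counts are well defined and independent of the prescribed subset.

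For part (i), I would fix an $\mathbb{F}_2$-linear identification $\mathbb{F}_{2^m}\cong\mathbb{F}_2^m$ and let $\GL_m(\mathbb{F}_2)$ act. Every $A\in\GL_m(\mathbb{F}_2)$ fixes $0$, permutes $\mathbb{F}_{2^m}^*$, and satisfies $\sum_{i}A x_i=A\sum_{i}x_i$, so it maps zero-sum $k$-subsets of $\mathbb{F}_{2^m}^*$ to zero-sum $k$-subsets. Over $\mathbb{F}_2$ any two distinct nonzero vectors are linearly independent (if $av+bw=0$ with $a,b\in\mathbb{F}_2$ not both zero, then one of $v,w$ is zero or $v=w$), so $\GL_m(\mathbb{F}_2)$ is $2$-transitive on $\mathbb{F}_{2^m}^*$. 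Hence the number of zero-sum $k$-subsets containing any prescribed pair $\{a,b\}\subseteq\mathbb{F}_{2^m}^*$ is a constant $c\in\mathbb{Z}_{\ge 0}$, and double-counting pairs $(T,\{a,b\})$ with $T$ a zero-sum $k$-subset and $\{a,b\}\subseteq T$ yields
$$
\binom{k}{2}\,N(k,0,\mathbb{F}_{2^m}^*)=\binom{2^m-1}{2}\,c,
$$
which gives part (i).

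For part (ii), I would use the affine group $\AGL(m,2)=\mathbb{F}_2^m\rtimes\GL_m(\mathbb{F}_2)$ acting on $\mathbb{F}_{2^m}$. An affine map $\phi(x)=Ax+b$ transforms the sum of a $k$-subset $T=\{x_1,\dots,x_k\}$ into $A\sum_i x_i+kb$; since $k$ is even and the characteristic is $2$, $kb=0$, so $\AGL(m,2)$ preserves zero-sum $k$-subsets of $\mathbb{F}_{2^m}$. Given three distinct points, translate one to $0$ and invoke $2$-transitivity of $\GL_m(\mathbb{F}_2)$ on $\mathbb{F}_{2^m}^*$ to see that $\AGL(m,2)$ is $3$-transitive on $\mathbb{F}_{2^m}$. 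Therefore the number of zero-sum $k$-subsets containing any prescribed $3$-subset is a constant $c'\in\mathbb{Z}_{\ge 0}$, and double-counting triples gives
$$
\binom{k}{3}\,N(k,0,\mathbb{F}_{2^m})=\binom{2^m}{3}\,c',
$$
which is exactly part (ii).

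The principal obstacle is identifying the correct group action: the argument crucially exploits two characteristic-$2$ features, namely $kb=0$ for even $k$ (which lets translations enter in part (ii)) and the linear independence of any two distinct nonzero $\mathbb{F}_2$-vectors (which drives the $2$-transitivity underlying both parts). An alternative, purely arithmetic route would substitute the closed-form expressions from Theorems \ref{th5.2} and \ref{th5.3} and verify divisibility via $2$-adic valuations of the resulting binomial coefficients using Kummer's or Lucas's theorem, but this appears notably more delicate than the symmetry argument above, and as a bonus the symmetry argument simultaneously justifies the constancy of block counts invoked in Lemmas \ref{le13} and in the proof of Theorem \ref{th4.4}.
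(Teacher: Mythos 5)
Your proof is correct, but it proceeds by a genuinely different route than the paper. The paper obtains the corollary arithmetically: it simply asserts that the divisibility follows ``directly'' by combining the recurrence \eqref{eq(7)} with the closed-form expressions \eqref{eq(12)}--\eqref{eq(15)} (equivalently, from Li--Wan's formulas, using identities such as $\frac{k(k-1)}{(2^m-1)(2^m-2)}\binom{2^m-1}{k}=\binom{2^m-3}{k-2}$ together with a parity analysis of the second term), so its content is an explicit computation of the quotient. Your argument instead runs a double count over pairs (resp.\ triples) inside zero-sum $k$-subsets, with the constancy of the incidence number $c$ (resp.\ $c'$) supplied by the $2$-transitivity of $\GL_m(\mathbb{F}_2)$ on $\mathbb{F}_{2^m}^*$ and the $3$-transitivity of $\AGL(m,2)$ on $\mathbb{F}_{2^m}$ (the latter using that $kb=0$ for even $k$ in characteristic $2$); both transitivity claims and the invariance of the zero-sum condition are verified correctly, and the identities $\binom{k}{2}N(k,0,\mathbb{F}_{2^m}^*)=\binom{2^m-1}{2}c$ and $\binom{k}{3}N(k,0,\mathbb{F}_{2^m})=\binom{2^m}{3}c'$ give exactly the stated divisibilities (understood, as in the paper, as integrality of the quotient). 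What your approach buys is twofold: it avoids all manipulation of the closed-form sums, and it proves the strictly stronger fact that the number of zero-sum $k$-subsets through a fixed pair (resp.\ triple) is \emph{constant}, which is precisely what Lemma \ref{le13} and Theorem \ref{th4.4} actually need for the $2$- and $3$-design property and which the corollary as stated (mere integrality of an average) does not by itself deliver. What the paper's arithmetic route buys is the explicit value of that quotient, i.e.\ the design parameters $\lambda_1^c$ and $\lambda_2^c$, which your symmetry argument identifies only as some nonnegative integer.
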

\end{document}